\newtheorem{lem}{Lemma}[section]
\newtheorem{lemma}[lem]{Lemma}
\newtheorem{cor}[lem]{Corollary}
\newtheorem{propos}[lem]{Proposition}
\newtheorem{thm}[lem]{Theorem}
\newtheorem{rem}[lem]{Remark}
\newtheorem{defin}[lem]{Definition}
\newcommand{\comment}[1]{}
\newcommand{\N}{\mathbb{N}}
\newcommand{\Z}{\mathbb{Z}}
\newcommand{\R}{\mathbb{R}}
\newcommand{\Ex}[1]{\mathbb E\left[#1\right]}
\newcommand{\Tr}[1]{\textnormal{Tr}\left(#1\right)}
\newcommand{\Var}[1]{\textnormal{Var}\left(#1\right)}
\newcommand{\Cov}[1]{\textnormal{Cov}\left(#1\right)}
\numberwithin{equation}{section}
\begin{document}

\title{Limit Theorems on the Mesoscopic Scale for the Anderson Model}
\author{Yoel Grinshpon \thanks{Institute of Mathematics, The Hebrew University of Jerusalem, Jerusalem, 91904, Israel.
Email: yoel.grinshpon@mail.huji.ac.il}}

\maketitle

\begin{abstract}
In this paper, we study eigenvalue fluctuations of the finite volume Anderson model in the mesoscopic scale. We carry out this study in a regime of exponential localization and prove a central limit theorem for the eigenvalue counting function in a shrinking interval.
\end{abstract}

\section{Introduction and Preliminaries}
The purpose of this paper is to prove a mesoscopic central limit theorem (CLT) for the eigenvalue counting function of finite truncations of the Anderson model in a regime of energies where localization holds.

The discrete Anderson model on $\Z^d$ is the random operator
$$H:\ell^2(\Z^d)\longrightarrow \ell^2(\Z^d)$$
$$H = \Delta + V $$
where $\Delta$ is the discrete Laplacian and $V$ is a multiplication operator, i.e.,
$$(H u)_n = \sum_{m\sim n}u_m +V_n \cdot u_n$$
where $V_n$  $(n \in \Z^d )$ are i.i.d.\ random variables. We will assume that the distribution of $V_n$ is absolutely continuous with respect to the Lebesgue measure with density $\rho(v)\textnormal{d}v$ satisfying $||\rho||_\infty < \infty$.

We will be looking at finite truncations of $H$,
$$H_L = \chi_L^d H \chi_L^d $$
as $L\longrightarrow \infty$ where $\chi_L^d$ is the indicator function on the cube
$$\Lambda_L = [-L,L]^d\cap \Z^d,$$
and we denote by
$$E_1^L \le ... \le E_{|\Lambda_L|}^L $$
the eigenvalues of $H_L$.
The \emph{empirical measure} of $H_L$ is the measure
$$\textrm{d}\mu_L=\frac{1}{|\Lambda_L|}\sum_{i=1}^{|\Lambda_L|} \delta_{E_i^L}$$
where $\delta_{E_i^L}$ is the Dirac measure at $E_i^L$. When the empirical measure has a limit as $L \rightarrow \infty$, this limit $\mu$ is known as the density of states (DOS) of $H$. In our case, this limit indeed exists and is known to be absolutely continuous with respect to the Lebesgue measure almost everywhere \cite[Ch.\ 4]{AW}, with a Radon-Nikodym derivative
$$\textrm{d}\mu(E)=f(E)\textrm{d}E.$$

 Our goal will be to understand the fluctuations of $\mu_L$ on the mesoscopic scale and establish the convergence of these fluctuations to a Gaussian limit. In other words, we shall look at intervals of length $\sim \frac{1}{|\Lambda_L|^\eta}$ around a certain energy $E$ where $0 < \eta <1$, and study the fluctuations of the counting function of the eigenvalues in these intervals as $L$ tends to $\infty$.
We study these fluctuations in a regime of localization, i.e., where the spectrum of $H$ is pure-point with exponentially decaying eigenfunctions.

We define the Green's function  $G_{\Lambda}(x,x;z)$ on some box $\Lambda \subseteq \Z^d$ to be
$$G_\Lambda(x,y;z)=\langle x,\left(H_{\Lambda}-z\right)^{-1}y \rangle$$
(where $H_\Lambda$ is $H$ restricted to $\Lambda$), and the set $\mathcal{L}$ to be the following:

\begin{defin}
We say that $E \in \mathcal{L}$, if
\begin{enumerate}
  \item $E \in \sigma(H)$.
  \item $f(E)$ exists and is positive.
  \item There exist $s \in (0,1)$, $C_1>0$, $C_2$ and $r>0$ such that
  $$\Ex{\left|G_\Lambda(x,y;z)\right|^s} \le C_1e^{-C_2|x-y|} $$
for any hypercube $\Lambda \subset \Z^d$, $x \in \Lambda$, $y \in \partial \Lambda$ and $z \in \mathbb{C}_+$ such that $|z-E|<r \ 
\big($we say that $y \in \partial \Lambda$ if $y \in \Lambda$ and there is a $y' \in \partial \Lambda$ such that $|y-y'| \big)$.
\end{enumerate}

\end{defin}

Limit theorems for the fluctuations of the eigenvalues in the Anderson model have already been obtained, mostly on the microscopic scale ($\eta = 1$) and on the macroscopic scale ($\eta = 0$). Minami \cite{Minami} proved that under a certain assumption which implies localization ($E \in \mathcal{L}$, an assumption which we shall adopt as well), the eigenvalue point process converges to a Poisson point process on the microscopic scale. This was after Molchanov \cite{M} did so for the continuous case. This result can be interpreted as statistical independence of the eigenvalues as $L \longrightarrow \infty $ in a window of length $\sim \frac{1}{|\Lambda_L|}$ around some energy $E \in \sigma(H)$.  

 Several macroscopic limit theorems have been proven for the Anderson model \cite{KP,PS,Reznikova} in one-dimension, all showing Gaussian behavior of the trace of $f(H_L)$ for different functions $f$ with a variance that grows proportionally to $L$. We remind the reader that in one-dimension, there is always localization for the Anderson model \cite[Ch.\ 9]{CFKS}. In addition, the author and White proved a macroscopic CLT for polynomials of the multi-dimensional Anderson model \cite{GW}. Indeed, for $p(H_L)$, if the distribution of the elements of the potential $V$ is supported on more than three points, the variance of $\Tr{p(H_L)}$ is of magnitude $L^d$.
  These results (or more specifically, the growth rate of the variance) on the macroscopic scale resemble the expected behavior of a sum of i.i.d.\ random variables, and therefore can be interpreted as some sort of limit independence of the eigenvalues of $H_L$ as $L$ tends to $\infty$. A natural question now arises - in regions where localization holds, can this limit independence be seen in the scales between the microscopic and the macroscopic, i.e., in windows of magnitude $\sim \frac{1}{|\Lambda_L|^\eta}$ around some energy $E$?

Some results have already been achieved for the mesoscopic scale. Germinet and Klopp \cite{KG} established several results regarding the level spacing statistics and the localization centers of the eigenvalues of many Schr\"odinger operators (including the Anderson model) where localization holds, both on the microscopic scale and on mesoscopic scales.
Warzel and Von Soosten \cite{VSW} proved a law of large numbers for the eigenvalues encompassing a wide range of Schr\"odinger operators, which include the Anderson model in one-dimension on some mesoscopic scales. It is worth pointing out that our method of proof utilizes Minami's result along with general features of the Anderson model, and does not require additional hard analysis.

Mesoscopic scale fluctuations of eigenvalues are of great interest in the field of random matrix theory. Such theorems have been proved for the CUE \cite{Sosh} and the GOE \cite{BMK}. In recent years there has been a plethora of such results. \cite{BD}\cite{FKS}\cite{DJ}\cite{Lam} is a very partial list of examples for such results.

For $a,b \in \R$ such that $a < 0 <b$ and $E \in \sigma(H)$, we shall define $X_L = X_L(\eta,E,a,b)$ to be the number of eigenvalues of $H_L$ in the interval
$$I_L = I_L\left(E,L,\eta,a,b\right) \equiv \left(E + \frac{a}{|\Lambda_L|^\eta}, E + \frac{b}{|\Lambda_L|^\eta} \right). $$
We will prove the following theorems:

\begin{thm} \label{thm:wlln}
For $a,b \in \R$ such that $a < 0 < b $, $0 < \eta < 1$ and $E \in \mathcal{L}$,
$$\frac{X_L}{|\Lambda_L|^{1-\eta}}\longrightarrow f(E)\cdot (b-a)$$
in probability.
\end{thm}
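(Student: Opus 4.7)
The plan is to establish the theorem via Chebyshev's inequality by identifying the leading asymptotic of $\mathbb{E}[X_L]$ and showing that $\mathrm{Var}(X_L)$ grows strictly more slowly than $\mathbb{E}[X_L]^2\sim|\Lambda_L|^{2(1-\eta)}$.

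For the expectation, I would start from $\mathbb{E}[X_L]=|\Lambda_L|\cdot\mathbb{E}[\mu_L(I_L)]$. The Wegner estimate, an immediate consequence of $||\rho||_\infty<\infty$, says that the expected finite-volume spectral density is uniformly bounded. Combined with the known weak convergence $\mathbb{E}[\mu_L]\to\mu$ and condition~2 of $\mathcal{L}$ (existence and positivity of $f(E)$), a mesoscopic Lebesgue differentiation of this bounded density at $E$ yields
$$\mathbb{E}[X_L]=f(E)(b-a)|\Lambda_L|^{1-\eta}(1+o(1)),$$
once boundary corrections---bounded by Wegner on an $O(|\Lambda_L|^{1-1/d})$ set of sites---are absorbed into the $o(1)$ error.

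For the variance I would decouple the box. Partition $\Lambda_L$ into disjoint sub-cubes $\{\Lambda_L^{(j)}\}$ of side length $\ell_L$ (a parameter to be chosen), let $\tilde H_L=\bigoplus_j H_{\Lambda_L^{(j)}}$, and set $\tilde X_L=\sum_j N_j$, where $N_j$ counts eigenvalues of $H_{\Lambda_L^{(j)}}$ in $I_L$. Because the potentials in different sub-cubes are independent, the $N_j$ are mutually independent. Minami's estimate applied to each sub-cube gives $\mathbb{E}[N_j(N_j-1)]=O(\ell_L^{2d}|\Lambda_L|^{-2\eta})$, and summing over the $\sim|\Lambda_L|/\ell_L^d$ blocks produces
$$\mathrm{Var}(\tilde X_L)=O(\ell_L^d|\Lambda_L|^{1-2\eta})+O(|\Lambda_L|^{1-\eta}),$$
which is $o(|\Lambda_L|^{2(1-\eta)})$ whenever $\ell_L^d=o(|\Lambda_L|)$.

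The hard part is comparing $X_L$ with $\tilde X_L$, and this is where the localization hypothesis (condition~3 of $\mathcal{L}$) must do the work. Writing $H_L=\tilde H_L+B$ with boundary coupling $B$ of rank $O(|\Lambda_L|/\ell_L)$, the deterministic Weyl inequality alone gives $|X_L-\tilde X_L|=O(|\Lambda_L|/\ell_L)$, which is $o(|\Lambda_L|^{1-\eta})$ only for $\eta<1/d$. To cover all $\eta\in(0,1)$, I would replace Weyl by a probabilistic argument driven by the fractional-moment Green's function decay: eigenfunctions of $H_L$ with energy near $E$ whose localization centers lie at distance $\gg\log|\Lambda_L|$ from every sub-cube boundary yield eigenvalues that are $O(e^{-c\ell_L})$ perturbations of eigenvalues of $\tilde H_L$, while a Wegner-type count bounds the number of ``boundary'' eigenvalues. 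Choosing $\ell_L$ polylogarithmic in $L$ should then give $|X_L-\tilde X_L|=o_{\mathbb{P}}(|\Lambda_L|^{1-\eta})$, which combined with the variance bound on $\tilde X_L$ closes the Chebyshev estimate. Turning the probabilistic Green's function decay into a sharp eigenvalue-count comparison, rather than a crude rank bound, is the delicate point of the argument.
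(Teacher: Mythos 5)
Your overall architecture --- decouple $\Lambda_L$ into independent sub-boxes, control the variance of the decoupled count by a Minami-type estimate, compare the coupled and decoupled counts using localization, and finish with Chebyshev --- is essentially the skeleton of the paper's argument. However, the proposal has genuine gaps at precisely the two places where the real work lies, and at one of them your route would not close as stated.

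First, the expectation. You assert that $\Ex{X_L}=f(E)(b-a)|\Lambda_L|^{1-\eta}(1+o(1))$ follows from the Wegner bound, the weak convergence $\Ex{\mu_L}\to\mu$, and the existence of $f(E)$, via a ``mesoscopic Lebesgue differentiation.'' This does not follow. Weak convergence of the expected empirical measures together with differentiability of the limit at $E$ gives no control over $\Ex{\mu_L(I_L)}$ on intervals shrinking like $|\Lambda_L|^{-\eta}$: one would need a rate of convergence of the finite-volume integrated density of states to the infinite-volume one that beats $|\Lambda_L|^{-\eta}$, which for $\eta$ close to $1$ is essentially the full strength of the localization machinery, not a consequence of condition~2 of $\mathcal{L}$. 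Wegner only supplies the upper bound. The paper avoids this entirely by choosing the sub-boxes to have side length $\sim L^{\eta}$, so that the interval $I_L$ sits at the \emph{microscopic} scale of each sub-box; the modified Minami theorem (Proposition \ref{thm:minami} and Corollary \ref{cor:minami}) then gives $X_{L,j}\overset{d}{\longrightarrow}Pois(\lambda)$ with $\lambda=f(E)(b-a)$, and Lemma \ref{lemmma:moments} (uniform integrability via the Combes--Germinet--Klein bound) upgrades this to convergence of the mean and variance, uniformly in $j$ by Remark \ref{rem:uniform_convergence}. The expectation asymptotics are thus an output of Minami's theorem rather than an input you must supply separately. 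With your polylogarithmic boxes the interval is sub-microscopic for each block and this device is unavailable.

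Second, the comparison $|X_L-\tilde X_L|=o_{\mathbb P}(|\Lambda_L|^{1-\eta})$. You correctly observe that the deterministic rank/interlacing bound only covers $\eta<1/d$ and that localization must do the rest, but you then only sketch the localization argument and explicitly defer ``the delicate point.'' This step is the technical core of the theorem; in the paper it is Proposition \ref{prop:mainprop}, proved by writing the difference of counting functions as a difference of traces of imaginary parts of resolvents, splitting each sub-box into a boundary layer (controlled by the Wegner-type bound $|\Ex{Im\,G_\Lambda(x,x;z)}|\le\pi\|\rho\|_\infty$) and an interior (controlled by the fractional-moment decay of Lemma \ref{lemma:internal}), and then passing from the smoothed functions $\phi_z$ with $Im\,z=|\Lambda_L|^{-2}$ to the indicator $\chi_{(a,b)}$ via the mollifier estimate of Lemma \ref{lemma:mollifier}. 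A proof that argues through eigenfunction localization centers, as you propose, would additionally have to rule out resonances between near-boundary eigenvalues and the endpoints of $I_L$; the resolvent route sidesteps this. Until both the expectation asymptotics and the comparison estimate are actually carried out, the Chebyshev argument has nothing to act on.
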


\begin{thm} \label{thm:main}

For $a,b \in \R$ such that $a < 0 < b $, $0 < \eta < 1$ and $E \in \mathcal{L}$,
$$\frac{X_L-|\Lambda_L|^{1-\eta} \cdot f(E)\cdot(b-a)}{\sqrt{|\Lambda_L|^{1-\eta}}} \overset{d}{\longrightarrow} N(0,\sigma^2) $$
where $N(0,\sigma^2)$ is a Gaussian random variable with mean $0$ and variance
$$\sigma^2 = f(E)^2\cdot (b-a)^2. $$
\end{thm}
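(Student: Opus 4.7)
The plan is to exploit Minami's microscopic Poisson convergence theorem together with the exponential Green's function decay supplied by $E\in\mathcal{L}$. The idea is to tile the mesoscopic interval $I_L$ with many microscopic sub-intervals, on each of which the eigenvalue count is approximately Poisson-distributed (by Minami), and then to use localization to argue that the counts in different sub-intervals are asymptotically independent. A classical central limit theorem for sums of independent random variables then delivers the Gaussian fluctuations.

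Concretely, I would introduce an auxiliary scale $m_L\to\infty$ with $m_L=o(|\Lambda_L|^{1-\eta})$ and partition $I_L$ into $K_L:=\lfloor(b-a)|\Lambda_L|^{1-\eta}/m_L\rfloor$ disjoint sub-intervals $J_1^L,\ldots,J_{K_L}^L$ of length $\sim m_L/|\Lambda_L|$. Writing $X_L^{(k)}$ for the eigenvalue count in $J_k^L$, one has $X_L=\sum_{k=1}^{K_L}X_L^{(k)}$, and Minami's theorem applied to each microscopic window $J_k^L$ makes $X_L^{(k)}$ asymptotically Poisson with parameter $\sim f(E)\,m_L$. To upgrade this marginal information to joint behavior, I would use the bound on $\Ex{\abs{G_\Lambda(x,y;z)}^s}$ from the definition of $\mathcal{L}$ to construct, for each $k$, a local surrogate $\widetilde X_L^{(k)}$ depending only on the potential on a sub-box $B_k^L\subset\Lambda_L$, choosing the $B_k^L$ pairwise disjoint with high probability. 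Independence of $\{V_n\}$ on disjoint regions then makes $\{\widetilde X_L^{(k)}\}_k$ independent, while the fractional-moment bounds give a quantitative estimate on $X_L^{(k)}-\widetilde X_L^{(k)}$ that is summable in $k$, so $\sum_k\abs{X_L^{(k)}-\widetilde X_L^{(k)}}=o_p\!\left(\sqrt{|\Lambda_L|^{1-\eta}}\right)$. The Lindeberg--Lyapunov CLT applied to the independent sum $\sum_k\widetilde X_L^{(k)}$, followed by transfer back to $X_L$ through this negligible error, yields the Gaussian limit with the variance parameter $\sigma^2$ read off from the asymptotic second moments of the individual summands.

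The main obstacle is the quantitative independence step. Since $K_L\to\infty$, the local approximation must be uniformly good in $k$ with summable error over all $K_L$ sub-intervals. The scale $m_L$ therefore must be chosen to balance two competing constraints: large enough for Minami's microscopic Poisson convergence to apply accurately inside each $J_k^L$, yet small enough that the localization boxes $B_k^L$ remain pairwise disjoint so the surrogates are genuinely independent. Threading this needle is the technical heart of the argument, and is where the fractional-moment bounds from $\mathcal{L}$ are used most delicately.
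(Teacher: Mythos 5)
Your decomposition runs in the wrong variable, and the key independence step cannot be repaired as stated. You partition the \emph{energy} interval $I_L$ into windows $J_k^L$ and then try to attach to each energy window a spatial sub-box $B_k^L$ so that the count $X_L^{(k)}$ is (approximately) a function of the potential on $B_k^L$ alone. No such box exists: the eigenvalues of $H_L$ lying in any given energy window have localization centers scattered over all of $\Lambda_L$, so every $X_L^{(k)}$ depends on the entire potential, and the fractional-moment bound in the definition of $\mathcal{L}$ --- which decouples the resolvent in \emph{space}, not in energy --- gives you no way to manufacture pairwise disjoint spatial surrogates indexed by energy windows. A second problem is that your windows $J_k^L$ have length $m_L/|\Lambda_L|$ with $m_L\to\infty$, hence contain on the order of $f(E)\,m_L\to\infty$ eigenvalues; Minami's theorem is a statement about windows of width $\sim 1/|\Lambda_L|$ containing $O(1)$ eigenvalues, so invoking it to say $X_L^{(k)}$ is ``asymptotically Poisson with parameter $\sim f(E)m_L$'' amounts to assuming the mesoscopic statement at a smaller mesoscopic scale, i.e.\ the argument is circular as written.

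The fix is to transpose the roles of space and energy, which is what the paper does: partition $\Lambda_L$ into $M_L\sim|\Lambda_L|^{1-\eta}$ spatial boxes $\Lambda_{L,j}$ of volume $\sim|\Lambda_L|^{\eta}$. The restrictions $H_{L,j}$ are \emph{exactly} independent because they involve disjoint sets of i.i.d.\ potential values; the single fixed interval $I_L$ is microscopic relative to each sub-box (its length is comparable to $|\Lambda_{L,j}|^{-1}$), so Corollary \ref{cor:minami} makes each sub-box count Poisson in the limit; and the fractional-moment/Green's-function decay is used only once, to show (Proposition \ref{prop:mainprop}) that $X_L-\sum_j X_{L,j}$ is negligible after normalization by $\sqrt{|\Lambda_L|^{1-\eta}}$. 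That comparison step works directly only for $\eta>\frac12$, and for $\eta\le\frac12$ the paper iterates the spatial decomposition through the intermediate scales $L^{1/2},L^{1/4},\dots$ before reaching scale $L^{\eta}$. Your proposal contains no substitute for either the exact spatial independence or this multi-scale iteration, so as it stands it does not yield the theorem.
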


\textbf{Acknowledgments.} The author wants to express his deep gratitude to Jonathan Breuer for his thorough and committed guidance throughout this research. The author would also like to thank Daniel Ofner, Asaf Shachar and Eyal Seelig for useful discussions. 

This research was supported in part by the Israel Science Foundation (Grant No. 1378/20).

\section{Overview and Preliminaries}

In this paper, we analyze the eigenvalue fluctuations on the mesoscopic scale using Minami's result for the microscopic scale. However, we will need a slightly modified version of Minami's theorem:

\begin{propos} \label{thm:minami}
Let $H$ be the Anderson model on $\Z^d$ with a potential $V$ with distribution $\rho (v) \textrm{d}v$ such that $||\rho||_\infty < \infty$.
For every $L$, $a_L \in \R^d$ and $c_L\ge 0$, we define the boxes
$$\Lambda_{L,a_L,c_L} = \left([-L+c_L,L-c_L]^d+a_L\right) \cap \Z^d$$
and denote by $E_j^L$ the eigenvalues of $H$ restricted to $\Lambda_{L,a_L,c_L}$ (denoted by $H_{\Lambda_L,a_L,c_L}$).
Assuming that $c_L \underset{L \rightarrow \infty}{\longrightarrow} 0 $, for every $E \in \mathcal{L}$, the process
$$\mu_{L,a_L,c_L}^E = \sum_{j=1}^{\left|\Lambda_{L,a_L,c_L}\right|} \delta_{\left|\Lambda_{L,a_L,c_L}\right|\cdot(E_j^L-E)}$$
converges to a Poisson point process with intensity
$$\frac{\textnormal{d}\nu}{\textnormal{d}E}=f(E)$$
where $f$ is the Radon-Nikodym derivative of the DOS with respect to the Lebesgue measure.

\end{propos}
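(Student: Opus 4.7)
The plan is to reduce Proposition \ref{thm:minami} to the original version of Minami's theorem, which covers the case $a_L = 0$, $c_L = 0$. The reduction relies on two structural features of the model: translation invariance of the i.i.d.\ potential and the robustness of Minami's argument to small modifications of the cubic box.

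The first step is to eliminate the shift $a_L$. Since $\{V_n\}_{n\in\Z^d}$ are i.i.d., $H_\Lambda$ and $H_{\Lambda + k}$ have the same distribution for any $k \in \Z^d$, so only the fractional part $a_L - \lfloor a_L \rfloor \in [0,1)^d$ of the shift is relevant; I may therefore assume $a_L \in [0,1)^d$. Combined with the hypothesis $c_L \to 0$, this implies that for all sufficiently large $L$ the set $\Lambda_{L,a_L,c_L}$ is a cubic region of $\Z^d$ of side length $2L + O(1)$, differing from $[-L,L]^d \cap \Z^d$ only in a boundary layer of cardinality $O(L^{d-1})$, and in particular $|\Lambda_{L,a_L,c_L}| = (2L)^d (1 + o(1))$.

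The second step is to verify that Minami's proof in \cite{Minami} goes through for this family of boxes. The proof has three main inputs: (i) the Wegner estimate, (ii) the Minami estimate on the second factorial moment of the eigenvalue count in a microscopic window, and (iii) the exponential decay of fractional moments of the Green's function near $E$. Inputs (i) and (ii) depend on the box only through $|\Lambda|$ and $\|\rho\|_\infty$ and are therefore uniform over all cubic boxes of size comparable to $(2L)^d$, while (iii) is built into the hypothesis $E \in \mathcal{L}$, which is phrased uniformly over \emph{any} hypercube $\Lambda \subset \Z^d$. With these uniform estimates in hand, the decomposition of $\Lambda_{L,a_L,c_L}$ into sub-cubes of an intermediate side length $\ell = \ell(L)$ (with $\ell \to \infty$ and $\ell/L \to 0$), together with the proof of asymptotic independence of well-separated sub-cubes and the at-most-one-eigenvalue-per-sub-cube bound, yields the claimed Poisson convergence.

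The main obstacle is essentially bookkeeping: one must check that all length scales and constants in Minami's decomposition can be chosen uniformly in the nuisance parameters $a_L$ and $c_L$. Because the only dependence on these parameters enters through $|\Lambda_{L,a_L,c_L}|$, which is $(2L)^d (1+o(1))$, this is routine. The correct intensity $f(E)$ of the limit process then follows from the existence and continuity of the density of states at $E$: the expected number of eigenvalues of $H_{\Lambda_{L,a_L,c_L}}$ in an interval of length $\lambda/|\Lambda_{L,a_L,c_L}|$ centered at $E$ tends to $f(E)\lambda$, which pins down the intensity.
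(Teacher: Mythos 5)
Your proposal is correct and matches the paper's approach: the paper simply states that the proof is essentially identical to Minami's original argument and omits it, and your reduction via translation invariance of the i.i.d.\ potential plus the observation that the Wegner/Minami estimates and the fractional-moment bound (built into $E \in \mathcal{L}$) are uniform over cubes of size $(2L)^d(1+o(1))$ is exactly the justification that claim rests on. No gap.
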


The next Corollary immediately follows.
\begin{cor} \label{cor:minami}
Under the assumptions of Theorem \ref{thm:minami}, given $a,b \in \R$ such that $a < 0 <b$, the number of eigenvalues of $\Lambda_{L,a_L,c_L}$ in the interval
$$I=\left(E+\frac{a}{(2L+1)^d},E+\frac{b}{(2L+1)^d}\right) $$
(denoted by $Z_L^I=Z_L^I(E,a,b,a_L,c_L)$) converges in distribution to a Poisson random variable with parameter $$\lambda = f(E)\cdot(b-a).$$
\end{cor}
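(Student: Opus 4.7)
The plan is to obtain the corollary as a routine consequence of Proposition \ref{thm:minami} by identifying $Z_L^I$ with the number of points of the rescaled process $\mu_{L,a_L,c_L}^E$ in an interval that converges to $(a,b)$. Setting $r_L := |\Lambda_{L,a_L,c_L}|/(2L+1)^d$, a direct count of lattice points in $[-L+c_L, L-c_L]^d$ together with the hypothesis $c_L \to 0$ yields $r_L \to 1$. The condition $E_j^L \in I$ is equivalent to $a r_L < |\Lambda_{L,a_L,c_L}|(E_j^L - E) < b r_L$, hence
$$Z_L^I = \mu_{L,a_L,c_L}^E(J_L), \qquad J_L := (a r_L,\, b r_L).$$

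Next I would exploit the convergence supplied by Proposition \ref{thm:minami}: $\mu_{L,a_L,c_L}^E$ converges in distribution to a Poisson point process $\Pi$ on $\R$ with constant intensity $f(E)$. Since $J_L \to (a,b)$, for any $\epsilon > 0$ and all sufficiently large $L$ one has $(a+\epsilon,\, b-\epsilon) \subseteq J_L \subseteq (a-\epsilon,\, b+\epsilon)$, giving the sandwich
$$\mu_{L,a_L,c_L}^E\big((a+\epsilon,\, b-\epsilon)\big) \;\le\; Z_L^I \;\le\; \mu_{L,a_L,c_L}^E\big((a-\epsilon,\, b+\epsilon)\big).$$
Because $\Pi$ is almost surely atom-free, all four endpoints are continuity points of $\Pi$, so the distributional convergence of point processes applies to each bound, yielding Poisson limits with parameters $f(E)(b-a-2\epsilon)$ and $f(E)(b-a+2\epsilon)$ respectively.

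Finally, letting $\epsilon \to 0$ and comparing characteristic functions of the two bounding Poisson laws, a standard squeeze identifies the limiting distribution of $Z_L^I$ as Poisson with parameter $f(E)(b-a)$. The only mildly technical step is verifying $r_L \to 1$ under the stated hypothesis $c_L \to 0$ and confirming that the intervals $J_L$ may be replaced by the fixed interval $(a,b)$ in the limit; everything else is the continuity of the Poisson process at deterministic levels, so I do not expect any serious obstacle.
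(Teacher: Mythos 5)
Your argument is correct and is precisely the routine verification the paper has in mind when it says the corollary ``immediately follows'' from Proposition \ref{thm:minami}: rescale to see $Z_L^I=\mu_{L,a_L,c_L}^E(J_L)$ with $J_L\to(a,b)$, then use that the limiting Poisson process has no fixed atoms to pass to the limit via the $\epsilon$-sandwich. No gaps; the paper simply omits these details.
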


The proof of Proposition \ref{thm:minami} is essentially identical to the proof of Minami's original result \cite{Minami} so we omit it.

We present here a brief overview of the proof of Theorem \ref{thm:main}. The general goal of our proof is to understand the fluctuations of the eigenvalues on the mesoscopic scale exploiting our knowledge of the fluctuations on the microscopic scale given in Corollary \ref{cor:minami}. We would like to do this in the following manner. We create a partition of $\Lambda_L$ to smaller boxes $\Lambda_{L,j}$ which have side lengths of magnitude $L^\eta$. Restricting $H$ to the new boxes $\Lambda_{L,j}$ (denoting it $H_{L,j}$ accordingly), from Corollary \ref{cor:minami}, the number of eigenvalues of $H_{L,j}$ in the intervals $I_L$ converges to a Poisson random variable. Since the eigenfunctions are exponentially localized, the transition from $H_L$ to $H_{L,j}$ will not significantly affect most eigenvalues of $H_L$. For relatively large $\eta$, this approximation indeed works (see Proposition \ref{prop:mainprop}), and one can infer a CLT for the eigenvales of $H_L$ in $I_L$ as a sum of independent random variables which converge to a Poisson random variable (Proposition \ref{prop:localclt}). However, for relatively small $\eta$, this approximation does not work. In this case, we use an inductive approach. This will be done in the proof of Theorem \ref{thm:main}.

In our proofs, convergence of the variance of $Z_L^I$ (as defined in Corollary \ref{cor:minami}) to the variance of a Poisson random variable would do us a great service. In general, convergence in distribution does not imply convergence of the variances, but it is true in our case.

\begin{lemma} \label{lemmma:moments}
Under the assumptions of Proposition \ref{thm:minami}, the $k$-th moment of $Z_L^I$ (as defined in Corollary \ref{cor:minami}) converges to the $k$-th moment of a Poisson random variable with parameter
$$\lambda =  f(E)\cdot (b-a). $$
In particular,
$$\Ex{Z_L^I} \overset{d}{\longrightarrow} \lambda $$
and
$$\Var{Z_L^I} \longrightarrow \lambda^2.$$
\end{lemma}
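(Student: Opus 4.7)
The plan is to upgrade the convergence in distribution from Corollary \ref{cor:minami} to convergence of moments by establishing uniform integrability of all powers of $Z_L^I$. Recall the standard fact: if $Z_L^I \overset{d}{\longrightarrow} Z$ and $\sup_L \Ex{|Z_L^I|^{k+1}} < \infty$, then $\{(Z_L^I)^k\}_L$ is uniformly integrable, and hence $\Ex{(Z_L^I)^k} \longrightarrow \Ex{Z^k}$. The entire lemma therefore reduces to producing a uniform-in-$L$ bound on every moment of $Z_L^I$.

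The key input is the higher-order Minami estimate of Combes, Germinet and Klein, which is available under the hypothesis $||\rho||_\infty < \infty$: for a finite box $\Lambda \subset \Z^d$ and an interval $J \subset \R$,
$$\Ex{N_\Lambda(J)\bigl(N_\Lambda(J)-1\bigr)\cdots\bigl(N_\Lambda(J)-k+1\bigr)} \le C_k \bigl(|\Lambda|\,|J|\bigr)^k,$$
where $N_\Lambda(J)$ is the number of eigenvalues of $H_\Lambda$ in $J$ and $C_k$ depends only on $k$ and $||\rho||_\infty$. Applying this with $\Lambda = \Lambda_{L,a_L,c_L}$ and $J = I$, the decisive point is that the microscopic scaling $|I| = (b-a)/(2L+1)^d$ makes $|\Lambda|\,|I|$ uniformly bounded in $L$ by a multiple of $b-a$. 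Consequently every factorial moment of $Z_L^I$ is bounded uniformly in $L$, and since each monomial $x^k$ is a finite linear combination of falling factorials $x(x-1)\cdots(x-j+1)$ (via Stirling numbers of the second kind), the same holds for the ordinary moments: $\sup_L \Ex{(Z_L^I)^k} < \infty$ for every $k \in \N$.

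Putting the two ingredients together, for fixed $k$ the uniform bound on $\Ex{(Z_L^I)^{k+1}}$ yields uniform integrability of $(Z_L^I)^k$; combined with the convergence in distribution from Corollary \ref{cor:minami}, this gives convergence of $\Ex{(Z_L^I)^k}$ to the $k$-th moment of a Poisson random variable with parameter $\lambda = f(E)\cdot(b-a)$. The two specific assertions in the lemma then follow from the cases $k=1,2$ together with the known first two moments of the Poisson distribution. The only real obstacle is justifying the higher-order Minami bound; once that is taken as a black box, the remainder of the argument is a soft combination of weak convergence with the factorial-moment estimate.
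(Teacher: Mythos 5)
Your proposal is correct and follows essentially the same route as the paper: both arguments upgrade the weak convergence from Corollary \ref{cor:minami} to moment convergence via the Billingsley criterion, with the uniform-in-$L$ moment bound supplied by the Combes--Germinet--Klein generalized Minami estimate and the observation that $|\Lambda_{L,a_L,c_L}|\cdot|I|$ stays bounded at the microscopic scale. The only cosmetic difference is that the paper invokes the CGK tail bound $P(Z^I\ge n)\le C^n|G|^n|I|^n/n!$ and sums directly, whereas you pass through factorial moments and Stirling numbers; both deliver the same uniform bound $\sup_L \Ex{(Z_L^I)^k}<\infty$.
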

In order to prove Lemma \ref{lemmma:moments}, we will need Theorems \ref{thm:moments} and \ref{minamiestimate}:

\begin{thm}\textnormal{\cite[Corollary from Theorem 25.12]{Billingsley}} \label{thm:moments}
Let r be a positive integer and $\epsilon >0 $. If $X_n \longrightarrow X$ in distribution and
$$\underset{n}{\sup \ } \Ex{|X_n|^{r+\epsilon}}< \infty. $$
Then $\Ex{|X|^r}< \infty$, and
$$\Ex{X_n^r}\longrightarrow \Ex{X^r}. $$
\end{thm}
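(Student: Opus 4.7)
The plan is to prove this via the standard route: convergence in distribution plus uniform integrability implies $L^1$ convergence (Vitali's theorem), and the hypothesis $\sup_n \E|X_n|^{r+\epsilon}<\infty$ is used precisely to secure the uniform integrability of $\{|X_n|^r\}$. The conclusion then follows by applying this to $X_n^r$.

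First I would establish uniform integrability of the family $\{|X_n|^r\}$. For any $M>0$, split by the event $\{|X_n|^r>M\}$; on this event $|X_n|^\epsilon > M^{\epsilon/r}$, so
\begin{equation*}
\Ex{|X_n|^r\,\mathbf{1}_{\{|X_n|^r>M\}}} \le \frac{1}{M^{\epsilon/r}}\,\Ex{|X_n|^{r+\epsilon}} \le \frac{C}{M^{\epsilon/r}},
\end{equation*}
where $C=\sup_n \Ex{|X_n|^{r+\epsilon}}<\infty$. This bound is uniform in $n$ and tends to $0$ as $M\to\infty$, which is the definition of uniform integrability. In particular the sequence $\sup_n \Ex{|X_n|^r}$ is also finite (take $M=1$ and add the trivial bound on the complement).

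Next I would invoke Skorohod's representation theorem to realize the convergence $X_n\to X$ almost surely on a common probability space. The continuous mapping theorem (applied to $x\mapsto x^r$ and to $x\mapsto |x|^r$) then gives $X_n^r\to X^r$ and $|X_n|^r\to |X|^r$ almost surely on that space. Fatou's lemma immediately yields
\begin{equation*}
\Ex{|X|^r} \le \liminf_{n\to\infty} \Ex{|X_n|^r} \le \sup_n \Ex{|X_n|^r} < \infty,
\end{equation*}
which is the first conclusion.

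For the convergence of moments, I would apply Vitali's convergence theorem: a.s.\ convergence of $X_n^r$ to $X^r$ combined with uniform integrability of $\{X_n^r\}$ (which follows from that of $\{|X_n|^r\}$) implies $\Ex{X_n^r}\to \Ex{X^r}$. The main conceptual step is the uniform integrability estimate; the rest is bookkeeping. A minor subtlety is that for odd $r$ the variable $X_n^r$ is not nonnegative, which is why one works with uniform integrability rather than directly with Fatou's lemma for $X_n^r$; but since $|X_n^r|=|X_n|^r$, the estimate above handles this cleanly. No additional input beyond Skorohod's representation, Fatou, and Vitali is needed.
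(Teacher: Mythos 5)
Your proof is correct, and it is essentially the canonical argument: the paper does not prove this statement at all but quotes it as a corollary of Theorem 25.12 in Billingsley, whose proof rests on exactly the uniform-integrability estimate $\Ex{|X_n|^r\,\chi_{\{|X_n|^r>M\}}}\le C M^{-\epsilon/r}$ that you derive. The only cosmetic difference is that Billingsley concludes directly from weak convergence plus uniform integrability (his Theorem 25.12, applied to $X_n^r$ after the continuous mapping theorem) instead of routing through Skorohod representation and Vitali's theorem, but the two routes are interchangeable and your handling of the odd-$r$ sign issue via $|X_n^r|=|X_n|^r$ is exactly right.
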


\begin{thm} \textnormal{\cite[Corollary 2.4]{CGK}} \label{minamiestimate}
Let 
$$H = \Delta+V $$
be the Anderson model on a finite graph $G$ such that for every $x \in G$, the distribution of $V_x$ is a.c.\ with respect to the Lebesgue measure with a Radon-Nikodym derivative $\rho$, and for any interval $I \subset \R$, define $Z^I$ to be the number of eigenvalues of $H$ in $I$.  Then for every interval $I \subset \R$, there exists a constant $C>0$ which depends only on the distribution $\rho$ such that
$$P\left(Z^I \ge n \right) \le \frac{C^n\cdot |G|^n\cdot |I|^n}{n!}.$$
\end{thm}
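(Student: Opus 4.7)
The plan is to deduce the stated tail bound from an $n$-th order factorial moment estimate of Minami--Wegner type via Markov's inequality. Since $Z^I(Z^I-1)\cdots(Z^I-n+1)$ vanishes on $\{Z^I < n\}$ and is at least $n!$ on $\{Z^I \ge n\}$, one has
$$P\left(Z^I \ge n\right) \le \frac{1}{n!}\Ex{Z^I(Z^I-1)\cdots(Z^I-n+1)},$$
so it suffices to prove the factorial moment bound
$$\Ex{Z^I(Z^I-1)\cdots(Z^I-n+1)} \le \left(C\cdot|G|\cdot|I|\right)^n,$$
where $C$ depends only on $\|\rho\|_\infty$.

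To bound the factorial moment, I would expand $Z^I(Z^I-1)\cdots(Z^I-n+1)$ as a sum over ordered $n$-tuples $(j_1,\ldots,j_n)$ of \emph{distinct} eigenvalue indices of $\mathbf{1}_I(E_{j_1})\cdots\mathbf{1}_I(E_{j_n})$, so that the expectation counts ordered $n$-tuples of eigenvalues of $H$ landing in $I$. The key input is then a rank-$n$ spectral averaging argument: for each ordered $n$-tuple of sites $\mathbf{x}=(x_1,\ldots,x_n)\in G^n$, I would condition on the values $V_y$ for $y \notin \{x_1,\ldots,x_n\}$ and view the eigenvalue count in $I$ as a function of $t=(V_{x_1},\ldots,V_{x_n})\in\R^n$. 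A classical analysis (Krein spectral shift applied iteratively to rank-one perturbations, in the spirit of Combes--Germinet--Klein) shows that the contribution of $\mathbf{x}$ to the tuple count satisfies a bound of the form
$$\int_{\R^n} \left(\text{contribution from }\mathbf{x}\right)\,\rho(t_1)\cdots\rho(t_n)\,dt_1\cdots dt_n \le \left(c\,\|\rho\|_\infty\,|I|\right)^n,$$
so that summing over the $|G|^n$ choices of $\mathbf{x}$ produces the required factorial moment estimate.

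The main obstacle is the rank-$n$ spectral averaging itself, and specifically the combinatorial selection step: given an $n$-tuple of distinct eigenpairs $(E_{j_k},\psi_{j_k})$, one must assign to it $n$ sites $\mathbf{x}$ such that the matrix $[\psi_{j_k}(x_\ell)]_{k,\ell}$ is invertible (or well-conditioned enough to justify the spectral averaging bound). This requires a careful linear-algebraic argument that works uniformly across all eigenvalue configurations, including degenerate or nearly degenerate ones, so as to avoid a small-denominator catastrophe when the chosen sites happen to align poorly with the eigenfunctions. This is the technical heart of the Combes--Germinet--Klein proof, and it is what forces the constant $C$ in the statement to absorb both $\|\rho\|_\infty$ and the combinatorial cost of the selection procedure.
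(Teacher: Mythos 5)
First, note that the paper itself offers no proof of this statement: it is quoted verbatim from Combes--Germinet--Klein \cite[Corollary 2.4]{CGK}, so the only meaningful comparison is with the proof in that reference. Your first step is exactly right and is indeed how the tail bound is obtained there: since $Z^I$ is a nonnegative integer-valued random variable, the falling factorial $Z^I(Z^I-1)\cdots(Z^I-n+1)$ vanishes for $Z^I\le n-1$ and is at least $n!$ on $\{Z^I\ge n\}$, so Markov's inequality reduces everything to the generalized Minami estimate
$$\Ex{Z^I(Z^I-1)\cdots(Z^I-n+1)}\le\left(C\,\|\rho\|_\infty\,|I|\,|G|\right)^n,$$
which is the main result of \cite{CGK}.

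The gap is in your proposed proof of that factorial-moment bound. The simultaneous rank-$n$ spectral averaging over $n$-tuples of sites, with the required invertibility (or uniform well-conditioning) of the matrix $[\psi_{j_k}(x_\ell)]_{k,\ell}$, is not a step you can carry out: there is no quantitative lower bound, uniform over configurations, on how well $n$ eigenfunctions can be ``separated'' by $n$ sites, and with an ill-conditioned matrix the change of variables from $(V_{x_1},\dots,V_{x_n})$ to eigenvalue positions degenerates and the factor $|I|^n$ is lost. This is not merely the technical heart of the CGK proof --- it is not in the CGK proof at all; they avoid joint spectral averaging entirely. Their argument is an induction on $n$ that peels off one eigenvalue at a time: writing $Z^I=\sum_{x\in G}\langle\delta_x,P_I(H)\delta_x\rangle$, monotone rank-one interlacing gives $\Tr{P_I\left(H+\tau\langle\delta_x,\cdot\rangle\delta_x\right)}\ge Z^I-1$ for \emph{every} $\tau>0$, so $(Z^I-1)_+$ is dominated by an eigenvalue count for an operator independent of $V_x$ (e.g., the $\tau\to\infty$ decoupled operator); integrating out $V_x$ alone via one-site spectral averaging (the Wegner mechanism) extracts a single factor $\|\rho\|_\infty|I|$ from $\langle\delta_x,P_I\delta_x\rangle$, summing over $x\in G$ gives a factor $|G|$, and iterating $n$ times yields the bound with constant $C^n$ --- with no nondegeneracy condition on eigenfunctions anywhere. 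To make your proposal into a proof, you should replace the rank-$n$ averaging by this one-site-at-a-time induction; as written, your plan stalls exactly at the step you yourself flagged as the main obstacle.
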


\begin{proof} [Proof of Lemma \ref{lemmma:moments}]
By Theorem \ref{thm:moments} it is enough to show that
$$\underset{L}{\sup \ }  \Ex{|Z_L^I|^k}< \infty. $$
for every $k \in \N$.
Note that $Z_L^I$ is a discrete random variable which takes values in $\N\cup\{0\}$ and hence from Theorem \ref{minamiestimate}, there exists a constant $C>0$ such that for any $L>0$,
$$\Ex{|Z_L^I|^k} = \sum_{n=1}^\infty n^k \cdot P\left(Z_L^I = n \right) \le  \sum_{n=1}^\infty n^k \cdot P\left(Z_L^I \ge n \right)  $$
$$\le \sum_{n=1}^\infty \frac{C^n}{n!} \cdot n^k \cdot |\Lambda_{L,a_L,c_L}|\cdot \frac{1}{|2L|^d} < \infty.$$

\end{proof}

\begin{rem} \label{rem:uniform_convergence}
The convergence of $\Var{Z_L^I}$ to the variance of a Poisson random variable with parameter $\lambda$  does not depend on $a_L,c_L$, in a sense that for a given $\epsilon > 0$, there exists $L_0 > 0 $ that such that if $L>L_0$,
$$\left|\Var{Z_L^I}-\lambda^2 \right| <\epsilon $$
for every $a_L \in \R^d$, $c_L < 1$.
Define
$$D = \{0,1\}^d $$
and corresponding random variables
$$\{W_L^k\}_{k\in D}$$
where $W_L^k$ is the number of eigenvalues of $H$ restricted to the box
$$\left(\left(0,2L+k_1\right)\times \left(0,2L+k_2\right)\times...\times \left(0,2L+k_d\right)\right)\cap \Z^d$$
in the interval
$$\left(\frac{a}{(2L+1)^d},\frac{b}{(2L+1)^d}\right). $$
For every $L\in \N$ $a_L\in \R^d$, and $0 \le c_L <1$, there exists $k \in D$ such that $Z_L^I$ and $W_L^k$ share the same distribution. From Corollary \ref{cor:minami} and Lemma \ref{lemmma:moments}, for a fixed $k \in D$, the variance of $W_L^k$ converges to $\lambda^2$ as $L \longrightarrow \infty$. Since $\{W_L^k\}_{k\in D}$ is a finite set, so does $\Var{Z_L^I}$.

This type of argument will appear several times throughout this paper.

\end{rem}

\section{The Proofs}
Similarly to the proof strategy in \cite{Minami}, for every $L$ and $0 < \beta <1$ we shall divide $\Lambda_L$ into separate $M_L(\beta)$ boxes $\Lambda_{L,j}$ in the following manner. We start by forming a partition of each edge of $\Lambda_L$ into $\left\lceil (2L)^{1-\beta} \right\rceil$ intervals of length
$$\frac{2L}{\left\lceil (2L)^{1-\beta} \right\rceil}.$$
This forms a partition of $\Lambda_L$ into $M_L(\beta)=\left(\left\lceil (2L)^{1-\beta} \right\rceil\right)^d$ boxes $\Lambda_{L,j}$ and induces the corresponding measures
$$\mu_{L,j}^E \equiv \mu_{L,j,\eta,\beta}^E = \sum_{i=1}^{|\Lambda_{L,j}|} \delta_{|\Lambda_{L,j}|^\eta \cdot (E_{i,j}^L-E)} $$
where $E_{i,j}^L$ is the $i$-th eigenvalue of $H_{L,j}$, the operator $H$ restricted to the $j$-th box.
Our goal will be to approximate $\mu_L^E$ with $\mu_{L,j}^E$ in the sense that
$$\underset{L \longrightarrow \infty}{\lim}\Ex{\frac{1}{|\Lambda_L|^\alpha} \left(\mu_L^E(g)-\mu_{L,j}^E(g)\right)}=0 $$
for a suitable $\alpha>0$ and certain functions $g \in L^1(\R)$.

\begin{rem}
As mentioned, for every box $\Lambda_{L,j}$, the edges of $\Lambda_{L,j}$ are of length $2L' = \frac{2L}{\left\lceil (2L)^{1-\beta}\right\rceil}$ and for every $\epsilon > 0$, for $L$ large enough,
$$(2L)^\beta - \epsilon \le  2L' \le (2L)^\beta, $$
so for some $L_0 \in \R$ and every $L>L_0$, there exist $c_L>0$ such that $c_L\rightarrow 0$ and $a_L \in \R^d$ such that
$$\Lambda_{L,j} = \left([-L^\beta+c_L,L^\beta-c_L]^d+a_L\right)\cap \Z^d $$
and therefore Corollary \ref{cor:minami} is applicable for the boxes $\Lambda_{L,j}$ in the sense that since the side lengths of $\Lambda_{L,j}$ converge to $(2L)^\beta$, the number of eigenvalues of $H_{L,j}$ in $I_{L^\beta}$ converges to a Poisson random variable.
\end{rem}

For the proof of the next proposition, we will need the following lemma.

\begin{lemma} \label{lemma:internal}
Let $E \in \mathcal{L}$. Then there exist $s \in (0,\frac12)$ and $B_1,B_2,r>0$ such that for any $\Lambda \subseteq \Lambda ' \subseteq \Z^d$ (where $\Lambda$ is finite), $x \in \Lambda$ and $z \in \mathbb{C}_+$ such that $|z-E|<r$,

$$\Ex{|G_\Lambda(x,x;z)-G_{\Lambda'}(x,x;z)|}\le \frac{B_1}{Im z^{2(1-s)}}e^{-B_2 dist(x,\partial \Lambda)}. $$
\end{lemma}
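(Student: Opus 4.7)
The plan is to expand $G_\Lambda(x,x;z)-G_{\Lambda'}(x,x;z)$ via the resolvent identity into boundary products of Green's functions, and then to combine the fractional-moment bound from the definition of $\mathcal{L}$ with the standard Aizenman--Molchanov a priori bound via Cauchy--Schwarz. Writing $H_{\Lambda'}=(H_\Lambda\oplus H_{\Lambda'\setminus\Lambda})+T$, where $T$ is the hopping operator supported on the edges joining $\Lambda$ to $\Lambda'\setminus\Lambda$, the resolvent identity evaluated at $(x,x)$ with $x\in\Lambda$ gives
\[
G_\Lambda(x,x;z)-G_{\Lambda'}(x,x;z)=\sum_{\substack{v\in\partial\Lambda,\;u\in\Lambda'\setminus\Lambda\\u\sim v}}G_{\Lambda'}(x,u;z)\,G_\Lambda(v,x;z),
\]
so the task reduces to bounding each $\Ex{|G_{\Lambda'}(x,u;z)\,G_\Lambda(v,x;z)|}$ and summing over the at most $2d|\partial\Lambda|$ boundary edges.

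For each summand I split $|G|=|G|^s\cdot|G|^{1-s}$ and use the deterministic bound $\|(H-z)^{-1}\|\le 1/\mathrm{Im}\,z$ on the $(1-s)$--piece of both Green's functions; this extracts precisely the singular prefactor $(\mathrm{Im}\,z)^{-2(1-s)}$ of the claim and leaves $\Ex{|G_{\Lambda'}(x,u;z)|^s|G_\Lambda(v,x;z)|^s}$. Cauchy--Schwarz in $L^2(\mathbb{P})$ then gives
\[
\Ex{|G_{\Lambda'}(x,u;z)|^s|G_\Lambda(v,x;z)|^s}\le\Ex{|G_{\Lambda'}(x,u;z)|^{2s}}^{1/2}\Ex{|G_\Lambda(v,x;z)|^{2s}}^{1/2}.
\]
The restriction $s<1/2$ enters exactly here: it ensures $2s<1$, so that both right-hand expectations are still fractional moments of order less than one. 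Shrinking $s$ further if needed allows $2s\le s^*$, where $s^*$ is the exponent supplied by the definition of $\mathcal{L}$, and Jensen's inequality propagates the exponential decay from exponent $s^*$ down to exponent $2s$ (at the price of worse constants).

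Since $v\in\partial\Lambda$ and the symmetry $G_\Lambda(v,x;z)=G_\Lambda(x,v;z)$ holds (as $H_\Lambda$ is a symmetric operator with real matrix entries, so $(H_\Lambda-z)^{-1}$ is complex symmetric), the hypothesis of $\mathcal{L}$ yields $\Ex{|G_\Lambda(v,x;z)|^{2s}}\le C_1 e^{-C_2|x-v|}$. The other factor has no decay a priori, but is bounded by a constant depending only on $s$ and $\|\rho\|_\infty$ via the Aizenman--Molchanov fractional-moment bound. Combining these and using $|x-v|\ge\mathrm{dist}(x,\partial\Lambda)$ for every $v\in\partial\Lambda$, I split the exponent in two: half of it controls $\sum_{v\in\partial\Lambda} e^{-(C_2/2)|x-v|}$ uniformly in $\Lambda$ (the polynomial growth of the number of boundary points at distance $n$ from $x$ is swallowed by the exponential), while the other half supplies the claimed $e^{-B_2\mathrm{dist}(x,\partial\Lambda)}$ factor.

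The main subtlety I anticipate is that the hypothesis of $\mathcal{L}$ is formulated for hypercubes, whereas $\Lambda\subseteq\Lambda'$ in this lemma are arbitrary. In the applications later in the paper the relevant boxes are (essentially) hypercubes, so this is a minor technical point that can be handled by enclosing $\Lambda$ in a slightly larger hypercube and performing an additional short resolvent expansion.
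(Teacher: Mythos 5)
Your argument is correct and is essentially the proof the paper has in mind: the paper omits the proof and defers to Lemma 17.9 of Aizenman--Warzel, which is precisely this resolvent-identity expansion over the boundary edges of $\Lambda$ combined with the splitting $|G|=|G|^{s}|G|^{1-s}$, the deterministic bound $|G|\le (\mathrm{Im}\,z)^{-1}$, Cauchy--Schwarz on the fractional moments (which is exactly where $s<\tfrac12$ and the prefactor $(\mathrm{Im}\,z)^{-2(1-s)}$ come from), and the Aizenman--Molchanov a priori bound. The hypercube caveat you flag is genuine but harmless here, since in all of the paper's applications $\Lambda$ and $\Lambda'$ are boxes.
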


The proof is very similar to the proof of Lemma 17.9 in \cite{AW} so we omit it.

\begin{propos} \label{prop:mainprop}
Denote by $g \equiv \chi_{(a,b)}$ the indicator function of the interval $(a,b)$ for $a < 0 <b$. For $\alpha,\beta,\eta>0$ such that $\alpha+\eta>1-\beta$ and $E\in \mathcal{L}$,
$$\underset{L \rightarrow \infty}{\lim} \frac{1}{|\Lambda_L|^{\alpha}} \Ex{\left|\mu_L^E(g)-\sum_{j=1}^{M_L(\beta)} \mu_{L,j}^E(g)\right|}=0. $$
\end{propos}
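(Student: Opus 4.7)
The plan is to express both sides as traces, $\mu_L^E(g) = \mathrm{Tr}(\chi_{I_L}(H_L))$ and $\mu_{L,j}^E(g) = \mathrm{Tr}(\chi_{I_L}(H_{L,j}))$ with $I_L = \bigl(E+\tfrac{a}{|\Lambda_L|^\eta},E+\tfrac{b}{|\Lambda_L|^\eta}\bigr)$, and to compare their difference via a $C^\infty$ smoothing of the indicator combined with the Helffer--Sj\"ostrand functional calculus, using Lemma~\ref{lemma:internal} as the key Green's function input.

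First I would fix a scale $\delta=\delta_L$ and pick smooth majorants/minorants $g_\delta^-\le g\le g_\delta^+$ of $g=\chi_{(a,b)}$ agreeing with $g$ outside $\delta$-neighborhoods of $a$ and $b$. The resulting energy windows at the endpoints of $I_L$ have length $\sim \delta/|\Lambda_L|^\eta$, so Theorem~\ref{minamiestimate} yields $\mathbb{E}[\mu_L^E(g_\delta^+-g_\delta^-)] = O(\delta |\Lambda_L|^{1-\eta})$, and summing the analogous bound over $j$ (using $\sum_j|\Lambda_{L,j}|=|\Lambda_L|$) gives the same order for the block-diagonal counterpart. For the smoothed parts I would apply the Helffer--Sj\"ostrand formula to $\phi(t) = g_\delta^\pm(|\Lambda_L|^\eta(t-E))$ with an almost-analytic extension $\tilde\phi$ of order $N$,
$$\mathrm{Tr}(\phi(H_L)) - \sum_j \mathrm{Tr}(\phi(H_{L,j})) = -\frac{1}{\pi}\int_{\mathbb{C}} \bar\partial\tilde\phi(z) \sum_j\sum_{x\in\Lambda_{L,j}}\bigl[G_L(x,x;z) - G_{L,j}(x,x;z)\bigr]\,d^2z,$$
and restrict $z$ to $|z-E|<r$ so that Lemma~\ref{lemma:internal} applies. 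The inner sum is then bounded in expectation by $\frac{B_1}{|\mathrm{Im}\,z|^{2(1-s)}}\sum_j\sum_x e^{-B_2\,\mathrm{dist}(x,\partial\Lambda_{L,j})} \lesssim \frac{(2L)^{d-\beta}}{|\mathrm{Im}\,z|^{2(1-s)}}$, since the exponential decay concentrates the spatial sum on the internal boundary of the partition, of total size $M_L(\beta)\cdot|\partial\Lambda_{L,j}|\sim (2L)^{d-\beta}$.

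The main obstacle is the parameter balance in the final integration. Taking $N$ large enough that $|\bar\partial\tilde\phi(z)|\lesssim|\mathrm{Im}\,z|^N\|\phi^{(N+1)}\|_\infty$ makes $|\mathrm{Im}\,z|^{-2(1-s)}$ integrable in $y$, and with $\phi^{(N+1)}$ supported near the endpoints of $I_L$ at smoothing scale $h = \delta/|\Lambda_L|^\eta$ the Helffer--Sj\"ostrand contribution is of the shape $(2L)^{d-\beta}$ times a power of $|\Lambda_L|^\eta/\delta$; one then optimizes $\delta$ against the Wegner error $\delta|\Lambda_L|^{1-\eta}$ and uses the assumption $\alpha+\eta>1-\beta$ to conclude that the total is $o(|\Lambda_L|^\alpha)$. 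The delicate point is carrying out the $z$-integration and the choice of $N$ efficiently enough that both the shrinking interval $|I_L|\sim|\Lambda_L|^{-\eta}$ and the small support of $\phi^{(N+1)}$ are exploited in a coordinated way, since a naive application only yields a stronger-than-needed condition of the form $\alpha>1-\beta/(2d)$, and closing the gap to the full claimed range $\alpha+\eta>1-\beta$ is what makes the step nontrivial.
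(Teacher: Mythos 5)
Your overall architecture (smooth the indicator, control the unsmoothed sliver by a Wegner-type bound, control the smoothed part by a resolvent-difference estimate fed by Lemma \ref{lemma:internal}) parallels the paper's, but the smoothing device is different: you use Helffer--Sj\"ostrand with a compactly supported mollification at scale $\delta$, while the paper convolves with the Poisson kernel $\phi_{i\epsilon}$ and reduces everything to the single identity $\mu_L^E(\phi_z)=\frac{1}{\pi|\Lambda_L|^{\eta}}\Tr{\mathrm{Im}\left(H_{\Lambda_L}-z_L\right)^{-1}}$ (Lemma \ref{lemma:mainprop}). The difference is not cosmetic, and, as you yourself flag, your version does not close: after optimizing $\delta$ against the Wegner error $\delta|\Lambda_L|^{1-\eta}$, the condition you arrive at is of the form $\alpha>1-\tfrac{\beta}{2d(1-s)}$, which has no $\eta$ on the left-hand side and is genuinely stronger than the claimed $\alpha+\eta>1-\beta$. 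This is not merely suboptimal; it is insufficient for the way Proposition \ref{prop:mainprop} is used later. In the proof of Theorem \ref{thm:wlln} one takes $\alpha=1-\eta$, $\beta=\eta$, and $1-\eta>1-\eta/(2d)$ is false for every $d\ge1$ and every $\eta$. So what you have is a strategy with an acknowledged, unresolved gap at exactly the step that carries the content of the proposition.

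The missing idea is the boundary/interior split performed inside Lemma \ref{lemma:mainprop}. Decompose each block $\Lambda_{L,j}$ into a thin boundary layer $\Lambda^b_{L,j}$ of width $L^c$ and its interior $\Lambda^i_{L,j}$. On the boundary layer you should \emph{not} use the resolvent-difference bound of Lemma \ref{lemma:internal}, whose constant blows up like $(\mathrm{Im}\,z)^{-2(1-s)}$; instead bound each diagonal term separately by the spectral-averaging estimate $\Ex{\mathrm{Im}\,G_\Lambda(x,x;z)}\le\pi\|\rho\|_\infty$, which is uniform in $\mathrm{Im}\,z$. Because $\mu_L^E(\phi_z)$ carries the prefactor $|\Lambda_L|^{-\eta}$ in front of the trace, the boundary contribution becomes $O\left((2L)^{d-\beta+c}/|\Lambda_L|^{\alpha+\eta}\right)$: this is where the $\eta$ enters the hypothesis, and it enters with no penalty from the smoothing scale. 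Only on the interior does one invoke Lemma \ref{lemma:internal}, and there the factor $e^{-B_2 L^{c}}$ crushes any polynomial blow-up in $(\mathrm{Im}\,z)^{-1}$, so the smoothing parameter may be taken super-polynomially small ($\epsilon=|\Lambda_L|^{-2}$ in the paper), making the unsmoothed error $O(|\Lambda_L|\cdot\|g-g_\epsilon\|_1)$ negligible with no optimization at all (Lemma \ref{lemma:mollifier}). If you wish to keep Helffer--Sj\"ostrand you can, but you must import this split and use an $\mathrm{Im}\,z$-uniform bound (or, for the indicator directly, $\Ex{\langle x,\chi_{I_L}(H)x\rangle}\le\|\rho\|_\infty|I_L|$) on the boundary layer; the coupling of the smoothing scale to the internal-boundary volume is precisely what your scheme cannot afford.
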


In order to prove Proposition \ref{prop:mainprop}, we begin by proving the analogous statement for a different set of functions. We define for every $z\in \mathbb{C}_+$,
$$\phi_z(u)=\frac{1}{\pi}Im\frac{1}{u-z}. $$
We shall prove the following lemma.
\begin{lemma} \label{lemma:mainprop}
For any $z\in \mathbb{C}_+$, $\alpha,\beta,\eta>0$ such that $\alpha+\eta>1-\beta$ and $E\in \mathcal{L}$,
$$\underset{L \rightarrow \infty}{\lim} \frac{1}{|\Lambda_L|^{\alpha}} \Ex{\left|\mu_L^E(\phi_z)-\sum_{j=1}^{M_L(\beta)} \mu_{L,j}^E(\phi_z)\right|}=0. $$
Moreover, taking $z  = \frac{i}{|\Lambda_L|^2}$, the statement still holds as $L$ tends to $\infty$, i.e.,
$$\underset{L \rightarrow \infty}{\lim} \frac{1}{|\Lambda_L|^{\alpha}} \Ex{\left|\mu_L^E\left(\phi_{\frac{i}{|\Lambda_L|^2}}\right)-\sum_{j=1}^{M_L(\beta)} \mu_{L,j}^E\left(\phi_{\frac{i}{|\Lambda_L|^2}}\right)\right|}=0. $$
\end{lemma}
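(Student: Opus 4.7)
The plan is to express both $\mu_L^E(\phi_z)$ and $\sum_j\mu_{L,j}^E(\phi_z)$ as rescaled traces of resolvents, reducing the target quantity to a sum of differences of diagonal Green's functions on $\Lambda_L$ versus the sub-boxes $\Lambda_{L,j}$, to which Lemma \ref{lemma:internal} applies directly. From the identity $\phi_z(|\Lambda_L|^\eta u)=(\pi|\Lambda_L|^\eta)^{-1}\operatorname{Im}(u-z/|\Lambda_L|^\eta)^{-1}$ and writing $w_L:=E+z/|\Lambda_L|^\eta$, one obtains
\[
\mu_L^E(\phi_z)=\frac{1}{\pi|\Lambda_L|^\eta}\sum_{x\in\Lambda_L}\operatorname{Im}G_{\Lambda_L}(x,x;w_L),
\]
and the analogous identity for $\mu_{L,j}^E(\phi_z)$ with $\Lambda_{L,j}$ in place of $\Lambda_L$. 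Since $\Lambda_L=\bigsqcup_j\Lambda_{L,j}$, the target difference equals
\[
\frac{1}{\pi|\Lambda_L|^\eta}\sum_{j=1}^{M_L(\beta)}\sum_{x\in\Lambda_{L,j}}\operatorname{Im}\bigl[G_{\Lambda_L}(x,x;w_L)-G_{\Lambda_{L,j}}(x,x;w_L)\bigr].
\]

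For $L$ sufficiently large, $|w_L-E|<r$, so Lemma \ref{lemma:internal} (with $\Lambda=\Lambda_{L,j}\subseteq\Lambda'=\Lambda_L$) bounds each summand in expectation by $B_1(\operatorname{Im}w_L)^{-2(1-s)}e^{-B_2\operatorname{dist}(x,\partial\Lambda_{L,j})}$. The exponential concentrates the mass in an $O(L^{\beta(d-1)})$ boundary layer inside each box; summed over the $M_L(\beta)\sim L^{d(1-\beta)}$ boxes this gives $\sum_{j,x}e^{-B_2\operatorname{dist}(x,\partial\Lambda_{L,j})}=O(L^{d-\beta})$. For fixed $z\in\mathbb{C}_+$, $\operatorname{Im}w_L=\operatorname{Im}z/|\Lambda_L|^\eta$, so combining all factors yields
\[
\frac{1}{|\Lambda_L|^\alpha}\,\Ex{\left|\mu_L^E(\phi_z)-\sum_j\mu_{L,j}^E(\phi_z)\right|}\le\frac{C}{(\operatorname{Im}z)^{2(1-s)}}\,L^{d\eta(1-2s)+d-\beta-d\alpha},
\]
and optimizing $s$ over its admissible range in Lemma \ref{lemma:internal} makes the $L$-exponent strictly negative exactly under $\alpha+\eta>1-\beta$, proving the first claim.

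The main obstacle is the moreover part. For $z_L:=i/|\Lambda_L|^2$ one has $\operatorname{Im}w_L=|\Lambda_L|^{-(2+\eta)}$, so $(\operatorname{Im}w_L)^{-2(1-s)}$ now grows polynomially in $|\Lambda_L|$ and the fixed-$z$ bound is no longer manifestly small. My plan is to carry through the same balance with a careful choice of $s$: the combination $|\Lambda_L|^{-\eta}(\operatorname{Im}w_L)^{-2(1-s)}=|\Lambda_L|^{2(2+\eta)(1-s)-\eta}$ can be made as close as desired to the fixed-$z$ case by choosing $s$ close to $1$, which is admissible because the fractional moment bound in the definition of $\mathcal{L}$ holds for some $s\in(0,1)$. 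The delicate step is to verify that Lemma \ref{lemma:internal} (or a suitable refinement under the assumption $E\in\mathcal{L}$) still applies with $s$ this close to $1$; granting this, the same condition $\alpha+\eta>1-\beta$ once again yields a strictly negative $L$-exponent and hence convergence to $0$.
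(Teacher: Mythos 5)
Your reduction to Green's functions and the identity
\[
\mu_L^E(\phi_z)-\sum_j\mu_{L,j}^E(\phi_z)=\frac{1}{\pi|\Lambda_L|^{\eta}}\sum_{j}\sum_{x\in\Lambda_{L,j}}\operatorname{Im}\bigl[G_{\Lambda_L}(x,x;z_L)-G_{\Lambda_{L,j}}(x,x;z_L)\bigr]
\]
match the paper, but the way you then estimate the sum has a genuine gap. You apply Lemma \ref{lemma:internal} to \emph{every} $x\in\Lambda_{L,j}$, including points at $O(1)$ distance from $\partial\Lambda_{L,j}$, where the exponential factor gives no gain and you are left with the full prefactor $(\operatorname{Im}z_L)^{-2(1-s)}$, which grows polynomially in $L$ even for fixed $z$ (since $\operatorname{Im}z_L=\operatorname{Im}z/|\Lambda_L|^{\eta}$). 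Your proposed rescue --- ``optimizing $s$'' and pushing $s$ close to $1$ --- is not available: in Lemma \ref{lemma:internal} the exponent $s$ is not a free parameter but a fixed value, and it is restricted to $s\in(0,\tfrac12)$; the definition of $\mathcal{L}$ only asserts the fractional moment bound for \emph{some} $s\in(0,1)$, and such bounds cannot be pushed toward $s=1$ (the first moment of the Green's function diverges near the spectrum, which is the whole reason the fractional moment method uses small $s$). With the actual constraint $s<\tfrac12$ your exponent $d\eta(1-2s)+d-\beta-d\alpha$ is positive in the very cases the paper needs (e.g.\ $\alpha=1-\eta$, $\beta=\eta$ gives $\eta(2d(1-s)-1)>0$), so the first claim is not established, and the ``moreover'' part, which you correctly identify as the main obstacle, is left unresolved.

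The missing idea is a boundary/interior splitting of each box. The paper sets $\Lambda_{L,j}^b=\{x:\operatorname{dist}(x,\partial\Lambda_{L,j})\le L^c\}$ for a small $c>0$ and treats the two regions with \emph{different} bounds: on the thin boundary layer it uses the $z$-uniform Wegner-type estimate $\bigl|\Ex{\operatorname{Im}G_{\Lambda}(x,x;z)}\bigr|\le\pi\|\rho\|_\infty$, which carries no $(\operatorname{Im}z)^{-1}$ blow-up at all and is summable because the layer contains only $O(M_L(\beta)\,|\Lambda_{L,j}^b|)$ points; on the interior it uses Lemma \ref{lemma:internal}, where now $\operatorname{dist}(x,\partial\Lambda_{L,j})\ge L^c$ so the factor $e^{-B_2L^c}$ is stretched-exponentially small and crushes the polynomial prefactor $(\operatorname{Im}z_L)^{-2(1-s)}$ for \emph{any} $s\in(0,\tfrac12)$ and for any $\operatorname{Im}z$ decaying polynomially in $L$ --- which is exactly what makes the ``moreover'' statement with $z=i/|\Lambda_L|^2$ come for free. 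Without this splitting (or some substitute for the boundary layer that does not pay the $(\operatorname{Im}z_L)^{-2(1-s)}$ price), your argument does not close.
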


\begin{proof} [Proof of Lemma \ref{lemma:mainprop}]
This proof is quite similar to Step 3 in \cite{Minami} and Lemma 17.7 in \cite{AW}. In order to proceed, we shall choose some $c>0$, and break each $\Lambda_{L,j}$ into separate components:
$$\Lambda_{L,j}^i = \{x \in \Lambda_{L,j} \ : \ d(x,\partial \Lambda_{L,j}) > L^c\}$$
$$\Lambda_{L,j}^b=\Lambda_{L,j}-\Lambda_{L,j}^i. $$
In other words, $\Lambda_{L,j}^b$ is the boundary of $\Lambda_{L,j}$ and $\Lambda_{L,j}^i$ is the interior of $\Lambda_{L,j}$.
Denoting $z_L=\frac{z}{|\Lambda_L|^\eta}$,
\begin{equation}
\begin{split}
  & \ \ \ \ \frac{1}{|\Lambda_L|^{\alpha}}  \left(\mu_L^E(\phi_z)-\sum_{j=1}^{M_L(\beta)} \mu_{L,j}^E(\phi_z)\right) \\
  &=\frac{1}{\pi} \frac{1}{|\Lambda_L|^{\alpha}} \sum_{i=1}^{|\Lambda_L|} Im \left(\frac{1}{|\Lambda_L|^\eta (E_i^L-E)-z} \right) \\
  &-\frac{1}{\pi |\Lambda_L|^\alpha} \sum_{j=1}^{M_L(\beta)}\sum_{i=1}^{|\Lambda_{L,j}|} Im \left(\frac{1}{|\Lambda_L|^\eta (E^L_{i,j}-E)-z} \right) \\
  &= \frac{1}{\pi |\Lambda_L|^{\alpha+\eta}}\left(\Tr{Im\left(H_{\Lambda_L}-z_L \right)^{-1}}-\sum_{j=1}^{M_L(\beta)} \Tr{Im\left(H_{\Lambda_{L,j}}-z_L \right)^{-1}} \right) 
\end{split}
\end{equation}
\begin{equation}
\begin{split}
  &=\frac{1}{\pi |\Lambda_L|^{\alpha+\eta}} \sum_{j=1}^{M_L(\beta)} \left( \sum_{x \in \Lambda_{L,j}} \left(Im \ G_{\Lambda_L} (x,x;z_L)-Im  \ G_{\Lambda_{L,j}}(x,x;z_L)\right)  \right)  \\
  &=\frac{1}{\pi |\Lambda_L|^{\alpha+\eta}} \sum_{j=1}^{M_L(\beta)}  \left(\sum_{x \in \Lambda^b_{L,j}} \left(Im \ G_{\Lambda_L} (x,x;z_L)-Im  \ G_{\Lambda_{L,j}}(x,x;z_L) \right) \right) \\
  &+\frac{1}{\pi |\Lambda_L|^{\alpha+\eta}} \sum_{j=1}^{M_L(\beta)}  \left(\sum_{x \in \Lambda^i_{L,j}} \left( Im \ G_{\Lambda_L} (x,x;z_L)-Im  \ G_{\Lambda_{L,j}}(x,x;z_L) \right) \right).
\end{split}
\end{equation}
Now we shall look at the expectation of the RHS and examine each sum separately.
$$\frac{1}{\pi |\Lambda_L|^{\alpha+\eta}} \Ex{\sum_{j=1}^{M_L(\beta)}  \left(\sum_{x \in \Lambda^b_{L,j}} Im \ G_{\Lambda_L} (x,x;z_L)-Im  \ G_{\Lambda_{L,j}}(x,x;z_L) \right)}  $$
$$\le \frac{1}{\pi |\Lambda_L|^{\alpha+\eta}} \sum_{j=1}^{M_L(\beta)}  \sum_{x \in \Lambda^b_{L,j}} \left(\Ex{ Im \ G_{\Lambda_L} (x,x;z_L)}+\Ex{\left(Im  \ G_{\Lambda_{L,j}}(x,x;z_L) \right)}\right) $$
$$\le \frac{1}{\pi |\Lambda_L|^{\alpha+\eta}} \sum_{j=1}^{M_L(\beta)} \sum_{x \in \Lambda^b_{L,j}} 2 ||\rho||_\infty $$
$$= \frac{M_L(\beta)}{\pi |\Lambda_L|^{\alpha+\eta}}\cdot |\Lambda^b_{L,j}|\cdot 2||\rho||_\infty,  $$
where the last inequality is true due to the fact that
$$|\Ex{Im\left(G_\Lambda(x,x;z)\right)}|\le \pi ||\rho||_\infty $$
for any $\Lambda \subset \Z^d$, $x\in \Lambda$, $z\in \mathbb{C}_+$ (see equations (2.19)--(2.22) in \cite{Minami}). Notice that this bound does not depend on $z$, so as long as we take $c$ small enough and under our assumptions on $\alpha$ and $\beta$,
$$\frac{1}{\pi |\Lambda_L|^{\alpha+\eta}} \Ex{\sum_{j=1}^{M_L(\beta)}  \left(\sum_{x \in \Lambda^b_{L,j}} Im \ G_{\Lambda_L} (x,x;z_L)-Im  \ G_{\Lambda_{L,j}}(x,x;z_L) \right)} \underset{L\longrightarrow \infty}{\longrightarrow} 0. $$

As for the second sum, since $z \in \mathcal{L}$ and since for any $r$, $|z_L-E|<r$ for $L$ large enough, from Lemma \ref{lemma:internal}, we obtain
$$\Ex{\sum_{x \in \Lambda^i_{L,j}} Im \ G_{\Lambda_L} (x,x;z_L)-Im  \ G_{\Lambda_{L,j}}(x,x;z_L)} $$
$$\le \frac{M_L(\beta) \cdot |\Lambda_{L,j}|}{|\Lambda_L|^{\alpha+\eta}}\frac{B_1 \cdot |\Lambda|^{1-\beta}}{Im z^{2(1-s)}} e^{-B_2 dist(\Lambda^i_{L,j},\partial \Lambda_{L,j})}\underset{L\longrightarrow \infty}{\longrightarrow} 0 $$
as long as $Im z$ decays polynomially in $L$.
Hence, we obtain the desired result.

\end{proof}

Having established Lemma \ref{lemma:mainprop}, we can now prove Proposition \ref{prop:mainprop}.

\begin{proof} [Proof of Proposition \ref{prop:mainprop}]
We shall use the set of functions $\phi_z$ with $z=i\epsilon$, $\epsilon > 0 $ to approximate the indicator functions $g \equiv \chi_{(a,b)}$.
For every $\epsilon > 0$, define
$$g_\epsilon = \phi_{i\epsilon}*g $$
and
$$\nu_L = \frac{1}{|\Lambda_L|^\alpha}\left(\mu_L^E-\sum_{j=1}^{M_L}\mu_{L,j}^E\right) $$
We will show that for $\epsilon = \frac{1}{|\Lambda_L|^2}$,
\begin{enumerate}
  \item $$\underset{L \rightarrow \infty}{\lim}  \Ex{\left|\nu_L(g_\epsilon) \right|} \overset{L \rightarrow \infty}{\longrightarrow} 0 $$

  \item $$\underset{L \rightarrow \infty}{\lim}  \Ex{\left|\nu_L(g-g_\epsilon) \right|}\overset{L \rightarrow \infty}{\longrightarrow} 0 $$

\end{enumerate}
and thus obtain the desired result.
For (1),
\begin{equation} \label{eq:dominated}
\Ex{\left|\nu_L(g_\epsilon)\right|}\le \int_{-\infty}^\infty |g(y)| \Ex{\nu(\phi_{i\epsilon+y})}dy=\int_a^b \Ex{\nu(\phi_{i\epsilon+y})}dy
\end{equation}
For $L$ large enough, $\left|\frac{y+i\epsilon}{|\Lambda_L|^\eta}\right|< r$, so according to Lemma \ref{lemma:mainprop},
$$\Ex{\nu(\phi_{i\epsilon+y})} \underset{L \rightarrow \infty}{\longrightarrow} 0 $$
uniformly, and by the Dominated Convergence Theorem, the RHS of \ref{eq:dominated} converges to $0$ as well.

For (2), since $0 \le g_\epsilon (x) \le 1$ for any $x \in \R$ (denoting by $||\cdot ||_1$ the $L^1$ norm on $\R$),
$$\Ex{|\nu_L(g-g_\epsilon)|} \le \Ex{\mu_L^E(g-g_\epsilon)}+\sum_{j=1}^{M_L\left(\beta\right)}\Ex{\mu_{L,j}^E(g-g_\epsilon)}$$
$$\le 2 |\Lambda_L|\cdot ||g-g_\epsilon||_1 \overset{L \rightarrow \infty}{\longrightarrow} 0$$
if
$$||g-g_\epsilon ||_1 = o\left(|\Lambda_L|^{-1.5}\right), $$
which is the statement of Lemma \ref{lemma:mollifier}.

\end{proof}

\begin{lemma} \label{lemma:mollifier}
For $\epsilon = \frac{1}{|\Lambda_L|^2}$,
$$||g-g_\epsilon ||_1 = o\left(|\Lambda_L|^{-1.5}\right) $$
\end{lemma}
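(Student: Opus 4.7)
The plan is to observe that $\phi_{i\epsilon}(u)=\frac{1}{\pi}\cdot\frac{\epsilon}{u^2+\epsilon^2}$ is the Poisson kernel, so in particular it is a probability density on $\R$. This lets us rewrite
\[
g(x)-g_\epsilon(x)=\int_{\R}\phi_{i\epsilon}(y)\bigl(g(x)-g(x-y)\bigr)\,dy,
\]
and then apply Minkowski's integral inequality (or Tonelli) to obtain
\[
\|g-g_\epsilon\|_1\le\int_{\R}\phi_{i\epsilon}(y)\,\|g-\tau_y g\|_1\,dy,
\]
where $\tau_y g(x)=g(x-y)$.

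Next I would compute $\|g-\tau_y g\|_1$ explicitly for $g=\chi_{(a,b)}$. Two shifted indicator functions of the same interval of length $b-a$ overlap on a set of length $(b-a-|y|)_+$, and the symmetric difference has measure $2\min(|y|,b-a)$. Plugging this in reduces the problem to evaluating
\[
\|g-g_\epsilon\|_1\le\frac{4}{\pi}\int_0^\infty\frac{\epsilon\,\min(y,b-a)}{y^2+\epsilon^2}\,dy.
\]
Splitting the integral at $y=b-a$, the inner piece is
\[
\int_0^{b-a}\frac{\epsilon y}{y^2+\epsilon^2}\,dy=\frac{\epsilon}{2}\log\!\left(1+\frac{(b-a)^2}{\epsilon^2}\right),
\]
and the outer piece is
\[
(b-a)\,\arctan\!\frac{\epsilon}{b-a}\le\epsilon.
\]
Thus $\|g-g_\epsilon\|_1=O\bigl(\epsilon\log(1/\epsilon)\bigr)$ as $\epsilon\downarrow 0$, with the implicit constant depending only on $b-a$.

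Finally, plugging in $\epsilon=|\Lambda_L|^{-2}$ yields
\[
\|g-g_\epsilon\|_1=O\!\left(\frac{\log|\Lambda_L|}{|\Lambda_L|^2}\right)=o\!\left(|\Lambda_L|^{-1.5}\right),
\]
as claimed. The one subtle point, and the only place one needs to be careful, is the saturation bound $\|g-\tau_y g\|_1\le 2(b-a)$ for large $|y|$: without it, one would be tempted to estimate $\|g-\tau_y g\|_1\le 2|y|$ and run into the divergent integral $\int\phi_{i\epsilon}(y)|y|\,dy$. Using the truncated bound $2\min(|y|,b-a)$ is what makes the estimate finite and in fact produces the logarithmic factor, which is still easily absorbed in $o(|\Lambda_L|^{-1.5})$.
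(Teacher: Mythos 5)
Your proof is correct, and it takes a genuinely different route from the paper's. The paper computes an explicit antiderivative of $g_\epsilon(x)=\frac1\pi\left(\arctan\frac{x-a}{\epsilon}-\arctan\frac{x-b}{\epsilon}\right)$ and evaluates the resulting $\arctan$/$\log$ expressions at $a$, $b$ and $\pm\infty$ to get $O(\epsilon^\alpha)$ for every $\alpha<1$ on each of the three pieces $(-\infty,a)$, $(a,b)$, $(b,\infty)$. You instead exploit that $\phi_{i\epsilon}$ is an approximate identity and bound $\|g-g_\epsilon\|_1$ by $\int\phi_{i\epsilon}(y)\,\|g-\tau_y g\|_1\,dy$, using the exact $L^1$-modulus of continuity $\|g-\tau_y g\|_1=2\min(|y|,b-a)$ of the indicator. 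Your computation of the two pieces of the resulting integral is right (the inner piece gives the $\frac{\epsilon}{2}\log(1+(b-a)^2/\epsilon^2)$ term, the outer piece is $\le\epsilon$ via $\arctan t\le t$), and $O(\epsilon\log(1/\epsilon))$ with $\epsilon=|\Lambda_L|^{-2}$ indeed gives $o(|\Lambda_L|^{-1.5})$; quantitatively the two bounds are equivalent, since the paper's $O(\epsilon^\alpha)$ for all $\alpha<1$ also comes from the same logarithm. What your argument buys is generality and robustness: it works verbatim for any compactly supported $g$ of bounded variation (anything with $\|g-\tau_y g\|_1\lesssim\min(|y|,1)$) and for any nonnegative approximate identity with the tail bound $\int_{|y|>\delta}\phi_{i\epsilon}\lesssim\epsilon/\delta$, whereas the paper's computation is tied to the specific closed form of $g_\epsilon$ for an interval. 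You also correctly identify the one genuinely delicate point, namely that the Cauchy kernel has no first moment, so the naive bound $\|g-\tau_y g\|_1\le 2|y|$ must be truncated at $b-a$; this is exactly where the logarithm enters.
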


\begin{proof}
Using direct integration, one can see that

$$g_\epsilon(x) = \frac{1}{\pi} \left(\arctan\left(\frac{x-a}{\epsilon}\right)- \arctan\left(\frac{x-b}{\epsilon}\right) \right) $$
and therefore, denoting
$$A_\epsilon(x) = \frac{1}{\pi} \left((x-a)\arctan\left(\frac{x-a}{\epsilon}\right)- (x-b)\left(\arctan\left(\frac{x-b}{\epsilon} \right)\right) \right)$$
$$B_\epsilon(x) = \frac{1}{2\pi}\cdot \epsilon\left(\log\left(1+\left(\frac{x-a}{\epsilon}\right)^2\right)-\log\left(1+\left(\frac{x-b}{\epsilon}\right)^2\right)\right), $$
we get
$$\int g_\epsilon \textnormal{d}x = A_\epsilon(x) - B_\epsilon(x).$$

Using a direct calculation, one can verify that
$$A_\epsilon(a)- \underset{x \rightarrow - \infty}{\lim} A_\epsilon (x)=O(\epsilon), \underset{x \rightarrow -\infty}{\lim} B_\epsilon(x) - B_\epsilon(a)=O(\epsilon^\alpha)$$ for any $\alpha < 1$, so
$$\int_{-\infty}^a g_\epsilon \textnormal{d}x = O(\epsilon^\alpha) $$
for any $\alpha < 1$. A similar argument shows the same for
$$\int_b^\infty g_\epsilon \textnormal{d}x .$$
Finally,
$$(b-a)-A_\epsilon(b)+A_\epsilon(a) = O(\epsilon), \  B_\epsilon(b)-B_\epsilon(a) = O(\epsilon^\alpha)$$
so
$$\int_a^b \left(1-g_\epsilon\right) \textnormal{d}x = O(\epsilon^\alpha)$$
for any $0< \alpha <1$ as well.
\end{proof}

From Proposition \ref{prop:mainprop}, we can easily derive Theorem \ref{thm:wlln}:

\begin{proof} [Proof of Theorem \ref{thm:wlln}]
Take $\beta = \eta$, and $\alpha = 1-\eta$. Notice that for our choice of $\alpha$ and $\beta$, the assumptions of Proposition \ref{prop:mainprop} hold for any $0<\eta<1$. Hence, taking
$$X_L = \mu_L^E(\chi_{(a,b)}), \ X_{L,j} = \mu_{L,j}^E(\chi_{(a,b)}) $$
and applying Proposition \ref{prop:mainprop},
$$\frac{X_L-\sum_{j=1}^{M_L(\beta)} X_{L,j}}{|\Lambda_L|^{1-\eta}}\underset{L\longrightarrow \infty}{\longrightarrow} 0 $$
in probability.
From Proposition \ref{thm:minami}, $X_{L,j}\underset{L \longrightarrow \infty}{\longrightarrow} Pois(\lambda)$ with 
$$\lambda = f(E)\cdot (b-a)$$
(see Corollary \ref{cor:minami}). Together with Lemma \ref{lemmma:moments} and Remark \ref{rem:uniform_convergence}, this implies that
$$\Ex{X_{L,j}} \longrightarrow \lambda  $$
$$\Var{X_{L,j}} \longrightarrow \lambda^2 $$
uniformly in $j$.
 This means that the variance of $X_{L,j}$ is uniformly bounded. In addition, for each $L$, $X_{L,j}$ are independent, which altogether implies
  $$\Var{\frac{\sum_{j=1}^{M_L\left(\beta\right)}X_{L,j}}{M_L\left(\beta\right)}} = \frac{1}{M_L\left(\beta\right)^2}\sum_{j=1}^{M_L\left(\beta\right)} \Var{X_{L,j}} \longrightarrow 0 $$
  $$\Ex{\frac{\sum_{j=1}^{M_L\left(\beta\right)}X_{L,j}}{M_L\left(\beta\right)}} = \frac{1}{M_L\left(\beta\right)}\sum_{j=1}^{M_L\left(\beta\right)} \Ex{X_{L,j}} \longrightarrow f(E)\cdot(b-a), $$
and hence
$$\frac{\sum_{j=1}^{M_L(\beta)} X_{L,j}-M_L(\beta)\cdot f(E) \cdot (b-a)}{M_L(\beta)}\underset{L\rightarrow \infty}{\longrightarrow} 0 $$
in probability. Since
$$\frac{M_L(\beta)}{\left| \Lambda_L \right|^{1-\eta}}\longrightarrow 1, $$
we obtain the desired result.
\end{proof}

After obtaining the law of large numbers, we move on to our CLT. First, we begin with a relatively simple case where $\eta$ is large.

\begin{propos} \label{prop:localclt}
For $a_L \in \R^d$ and $c_L \ge 0 $, define
$$\Lambda_L \equiv \Lambda_{L,a_L,c_L} = \left([-L+c_L,L-c_L]+a_L\right)\cap \Z^d.$$
For $E \in \mathcal{L}$, $\tilde{\eta}>\frac12$ $a,b \in \R$ such that $a<0<b$ and $c_L \ge 0$ such that $c_L \underset{L \rightarrow \infty}{\longrightarrow} 0$,
$$\frac{X_L-|\Lambda_L|^{1-\tilde{\eta}}\cdot f(E)\cdot (b-a)}{\sqrt{|\Lambda_L|^{1-\tilde{\eta}}}}\underset{L\longrightarrow \infty}{\overset{d}{\longrightarrow}} N(0,\sigma^2) $$
where $\sigma^2 = f(E)^2\cdot (b-a)^2$.
Moreover,
$$\Var{\frac{X_L}{\sqrt{|\Lambda_L|^{1-\tilde{\eta}}}}}\longrightarrow \sigma^2 $$
as well.
\end{propos}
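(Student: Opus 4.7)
The plan is to partition $\Lambda_L$ into $M_L(\tilde{\eta})$ cubes of side length $\sim (2L)^{\tilde{\eta}}$ following the scheme of Section 3 with $\beta=\tilde{\eta}$, denote by $X_{L,j}$ the number of eigenvalues of $H_{L,j}$ in $I_L$, and establish a classical CLT for the independent sum $\sum_j X_{L,j}$. Proposition \ref{prop:mainprop} will reduce the problem from $X_L$ to this sum, and Corollary \ref{cor:minami} identifies each $X_{L,j}$ asymptotically with a Poisson variable of parameter $\lambda=f(E)(b-a)$, with moments pinned down by Lemma \ref{lemmma:moments}.

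First I would apply Proposition \ref{prop:mainprop} with $\alpha=(1-\tilde{\eta})/2$ and $\beta=\tilde{\eta}$; the required inequality $\alpha+\tilde{\eta}>1-\beta$ becomes $3\tilde{\eta}>1$, which is exactly where the assumption $\tilde{\eta}>\frac12$ is used. This yields
$$\frac{X_L-\sum_{j=1}^{M_L(\tilde{\eta})} X_{L,j}}{\sqrt{|\Lambda_L|^{1-\tilde{\eta}}}}\overset{L^1}{\longrightarrow}0,$$
so by Slutsky's lemma it suffices to prove the CLT for $\sum_j X_{L,j}$ with the same centering and normalization. The summands are independent because the boxes $\Lambda_{L,j}$ are disjoint and each $H_{L,j}$ depends only on $\{V_n:n\in\Lambda_{L,j}\}$.

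Next I would apply Lyapunov's CLT with $\delta=2$ to the centered array $\{X_{L,j}-\Ex{X_{L,j}}\}_{j=1}^{M_L(\tilde{\eta})}$. Theorem \ref{minamiestimate} furnishes a Poisson-type tail bound uniform in $L$ and $j$ (using Remark \ref{rem:uniform_convergence} to reduce to finitely many box shapes), so all moments are uniformly bounded and the Lyapunov condition is trivial. Lemma \ref{lemmma:moments}, combined with the same uniform convergence, gives $\Ex{X_{L,j}}\to\lambda$ and $\Var{X_{L,j}}\to\sigma^2$ uniformly in $j$. Together with $M_L(\tilde{\eta})/|\Lambda_L|^{1-\tilde{\eta}}\to 1$ this yields
$$\frac{\sum_j X_{L,j}-\sum_j \Ex{X_{L,j}}}{\sqrt{|\Lambda_L|^{1-\tilde{\eta}}}}\overset{d}{\longrightarrow}N(0,\sigma^2).$$

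The main obstacle is replacing the natural centering $\sum_j \Ex{X_{L,j}}$ by $|\Lambda_L|^{1-\tilde{\eta}}f(E)(b-a)$ with error $o\left(\sqrt{|\Lambda_L|^{1-\tilde{\eta}}}\right)$. Decomposing as $\left(\sum_j \Ex{X_{L,j}}-\Ex{X_L}\right)+\left(\Ex{X_L}-|\Lambda_L|^{1-\tilde{\eta}}\lambda\right)$, the first term is handled by Proposition \ref{prop:mainprop} applied with $\alpha$ chosen strictly below $(1-\tilde{\eta})/2$ (feasible since $\tilde{\eta}>\frac12$ gives strict slack in $\alpha+\tilde{\eta}>1-\beta$), while the second calls for a quantitative estimate $\Ex{X_L}=|\Lambda_L|^{1-\tilde{\eta}}f(E)(b-a)+o\left(\sqrt{|\Lambda_L|^{1-\tilde{\eta}}}\right)$ on the averaged density of states over the shrinking window $I_L$, which I would extract from the Wegner-type bound $|\Ex{Im\,G_{\Lambda_L}(x,x;z)}|\le\pi\|\rho\|_\infty$ and local regularity of $f$ at $E$. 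Finally, the moreover statement on the variance follows from Theorem \ref{thm:moments}: the Poisson-type tail from Theorem \ref{minamiestimate} shows that the normalized centered variable is uniformly $L^{2+\epsilon}$-bounded, so distributional convergence upgrades to convergence of the second moment.
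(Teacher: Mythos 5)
Your overall skeleton is the same as the paper's: partition with $\beta=\tilde{\eta}$, reduce $X_L$ to the independent sum $\sum_j X_{L,j}$ via Proposition \ref{prop:mainprop} with $\alpha=(1-\tilde{\eta})/2$, and run a triangular-array CLT (the paper uses Lindeberg--Feller, Theorem \ref{thm:Lindeberg}; your Lyapunov variant is equivalent here since the per-box moments are uniformly bounded at the Poisson scale $|\Lambda_{L,j}|\cdot|I_L|=O(1)$). One remark: the constraint from Proposition \ref{prop:mainprop} is $3\tilde{\eta}>1$, i.e.\ $\tilde{\eta}>\frac13$, so the hypothesis $\tilde{\eta}>\frac12$ is not ``exactly'' what that step uses; it is there for the halving induction in Theorem \ref{thm:main}. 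You also correctly isolate the recentering from $\sum_j\Ex{X_{L,j}}$ to $|\Lambda_L|^{1-\tilde{\eta}}f(E)(b-a)$ as a genuine issue (one the paper's own proof passes over silently), but your proposed fix is not a proof: the Wegner-type bound $|\Ex{\mathrm{Im}\,G_{\Lambda}(x,x;z)}|\le\pi\|\rho\|_\infty$ gives only an upper bound on expected counts, and neither it nor Lemma \ref{lemmma:moments} yields the rate $\Ex{X_{L,j}}-\lambda=o\bigl(M_L(\tilde{\eta})^{-1/2}\bigr)$ that you need to absorb the centering error; ``local regularity of $f$ at $E$'' with a quantitative rate is not among the hypotheses ($E\in\mathcal{L}$ only guarantees that $f(E)$ exists and is positive).

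The concrete gap is in your argument for the ``moreover'' variance statement. You claim Theorem \ref{minamiestimate} makes $(X_L-\Ex{X_L})/\sqrt{|\Lambda_L|^{1-\tilde{\eta}}}$ uniformly $L^{2+\epsilon}$-bounded. For the full box, that estimate reads $P(X_L\ge n)\le C^n\mu_L^n/n!$ with $\mu_L\asymp|\Lambda_L|^{1-\tilde{\eta}}\to\infty$; this controls raw moments like those of a Poisson variable with divergent parameter $C\mu_L$, but a tail bound cannot produce the cancellation needed for \emph{centered} moments, and the trivial bound gives $\Ex{|X_L-\Ex{X_L}|^{2+\epsilon}}\lesssim\mu_L^{2+\epsilon}$, which after normalization by $\mu_L^{1+\epsilon/2}$ still diverges. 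So uniform integrability of the square is not established and Theorem \ref{thm:moments} does not apply to the full-box variable. The paper instead gets the variance claim from the decomposition
$$\Var{\tfrac{X_L}{\sqrt{|\Lambda_L|^{1-\tilde{\eta}}}}}=\Var{\tfrac{X_L-\sum_j X_{L,j}}{\sqrt{|\Lambda_L|^{1-\tilde{\eta}}}}}+\Var{\tfrac{\sum_j X_{L,j}}{\sqrt{|\Lambda_L|^{1-\tilde{\eta}}}}}+2\,\Cov{\cdot,\cdot},$$
where the middle term converges to $\sigma^2$ by independence and the uniform per-box variance convergence (there the Minami-type bound \emph{is} at bounded parameter), and the remainder terms are argued to vanish. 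You should route your variance argument through the boxes in the same way rather than through a uniform-integrability claim for $X_L$ itself.
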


In order to prove Proposition \ref{prop:localclt} and Theorem \ref{thm:main}, we will need the Lindeberg-Feller CLT for arrays.

\begin{thm}\textnormal{\cite[Theorem 3.4.5]{Durett}} \label{thm:Lindeberg}
For each $n\in \N$, let $X_{n,m}$ $1 \le m \le n $ be independent random variables with $\Ex{X_{n,m}}=0$. Suppose
\begin{enumerate}
  \item $$\sum_{m=1}^n \Ex{X_{n,m}^2}\overset{n \rightarrow \infty}{\longrightarrow} \sigma^2 >0 $$
  \item For all $\epsilon > 0$,
  $$\underset{n \rightarrow \infty}{\lim} \Ex{X_{n,m}^2\cdot \chi_{|X_{n,m}|> \epsilon}}=0. $$
\end{enumerate}
Then
$$\frac{X_{n,1}+...+X_{n,n}}{\sqrt{\Var{\sum_{m=1}^n X_{n,m}}}} \overset{d}{\longrightarrow} N(0,1)  $$
as $n\longrightarrow \infty$.
\end{thm}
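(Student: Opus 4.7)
The plan is to prove the theorem via the classical characteristic-function argument due to Lindeberg and L\'evy. Write $\sigma_{n,m}^2 = \Var{X_{n,m}}$ and $s_n^2 = \sum_{m=1}^n \sigma_{n,m}^2$; by hypothesis~(1), $s_n^2 \longrightarrow \sigma^2 > 0$. Setting $\phi_{n,m}(u) = \Ex{e^{iuX_{n,m}}}$, L\'evy's continuity theorem reduces everything to showing
$$\prod_{m=1}^n \phi_{n,m}(t/s_n) \longrightarrow e^{-t^2/2}$$
for every $t\in\R$.

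A preliminary step is to extract from the Lindeberg hypothesis the uniform negligibility of the individual variances: writing $\sigma_{n,m}^2 \le \epsilon^2 + \Ex{X_{n,m}^2\chi_{|X_{n,m}|>\epsilon}}$, maximizing over $m$ and invoking hypothesis~(2) (in its standard summed form) shows $\max_m \sigma_{n,m}^2 \longrightarrow 0$, hence also $\max_m \sigma_{n,m}^2/s_n^2 \longrightarrow 0$. This will be needed both to put the per-summand characteristic functions in the closed unit disk and to Taylor-expand the logarithm at the end.

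The heart of the argument is a per-summand estimate based on the Taylor bound
$$\left|e^{ix}-1-ix+\tfrac{x^2}{2}\right|\le \min\!\left(\tfrac{|x|^3}{6},\,x^2\right).$$
Plugging $x = tX_{n,m}/s_n$, using $\Ex{X_{n,m}}=0$, and splitting the expectation according to $|X_{n,m}|\le \epsilon$ (cubic bound) versus $|X_{n,m}|>\epsilon$ (quadratic bound) yields
$$\left|\phi_{n,m}(t/s_n) - \Par{1 - \tfrac{t^2 \sigma_{n,m}^2}{2 s_n^2}}\right| \le \tfrac{|t|^3\epsilon}{6 s_n^2}\,\sigma_{n,m}^2 + \tfrac{t^2}{s_n^2}\,\Ex{X_{n,m}^2\chi_{|X_{n,m}|>\epsilon}}.$$
To pass to the product I would invoke the standard bound $|\prod_m z_m - \prod_m w_m| \le \sum_m |z_m-w_m|$ for complex numbers in the closed unit disk; both $\phi_{n,m}(t/s_n)$ and $1 - t^2\sigma_{n,m}^2/(2s_n^2)$ lie in this disk once $n$ is sufficiently large by the preceding paragraph. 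Summing over $m$ and using $\sum_m \sigma_{n,m}^2/s_n^2 = 1$, the total error is bounded by $|t|^3\epsilon/6 + o(1)$, which can be made arbitrarily small by sending first $n\to\infty$ and then $\epsilon\to 0$. Finally, $\prod_m (1-t^2\sigma_{n,m}^2/(2s_n^2)) \longrightarrow e^{-t^2/2}$ follows from the same identity $\sum_m \sigma_{n,m}^2/s_n^2 = 1$ together with $\log(1-u) = -u + O(u^2)$ applied to the uniformly small increments $t^2\sigma_{n,m}^2/(2 s_n^2)$.

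The main obstacle is the splitting step: the parameter $\epsilon$ in the cubic-vs-quadratic split must do double duty, keeping the leading truncation error $|t|^3\epsilon/6$ small \emph{and} allowing the tail term to be absorbed by the Lindeberg hypothesis. Aligning these two roles, and verifying that the per-summand factors really lie in the unit disk (which is precisely why the uniform negligibility $\max_m \sigma_{n,m}^2/s_n^2 \to 0$ is indispensable rather than a convenience), is the technical crux of the proof.
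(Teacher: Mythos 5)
Your argument is correct, and it coincides with the proof the paper implicitly relies on: the statement is quoted verbatim from Durrett (Theorem 3.4.5) without proof, and what you have written is precisely the standard Lindeberg characteristic-function proof given there (Taylor bound $\min(|x|^3/6,x^2)$, the unit-disk product inequality, uniform negligibility $\max_m \sigma_{n,m}^2/s_n^2\to 0$), including your correct reading of hypothesis (2) in its summed form $\sum_{m=1}^n \Ex{X_{n,m}^2\cdot\chi_{|X_{n,m}|>\epsilon}}\to 0$, which the paper's transcription misstates by dropping the sum over $m$. The only blemish is cosmetic: in the per-summand estimate the cubic term should carry $s_n^3$ rather than $s_n^2$ in the denominator, which is immaterial here since $s_n\to\sigma>0$.
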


\begin{proof} [Proof of Proposition \ref{prop:localclt}]
We start by creating a partition of $\Lambda_L$ (as described in the beginning of Section 3) with $\beta=\tilde{\eta}$ into separate boxes $\{\Lambda_{L,j}\}_{j=1}^{M_L\left(\beta\right)}$ (see Figure 1) and for every $\Lambda_{L,j}$, we define $X_{L,j}$ to be the number of eigenvalues of $H_{\Lambda_{L,j}}$ in $I_L$. From Corollary \ref{cor:minami}, for every fixed $1\le j \le M_L\left(\beta\right)$
$$X_{L,j}\overset{d}{\longrightarrow} Pois(\lambda) $$
with $\lambda = f(E)\cdot (b-a)$.

\begin{figure} \label{fig:bigeta}
\centering
\begin{tikzpicture}
\draw[thick] (0,0) rectangle (5,5);
\node[text width=6cm, anchor=west, right] at (2,6.5)   {$\Lambda_{L,j}$};
\draw[thick,->] (2.5,6.3) -- (2.5,5.5);

\draw[thick,<->] (-0.25,5) -- (-0.25,6);
\node[text width=6cm, anchor=west, right] at (-1,5.6)   {$L^\eta$};

\foreach \x in {0,1,2,3,4,5}
\foreach \y in {0,1,2,3,4,5}
{
\draw[thick] (\x,\y) rectangle (\x+1,\y+1);
}
\foreach \x in {0,0.1,0.2,0.3,0.4,0.5,0.6,0.7,0.8,0.9,1,1.1,1.2,1.3,1.4,1.5,1.6,1.7,1.8,1.9,2,2.1,2.2,2.3,2.4,2.5,2.6,2.7,2.8,2.9,3,3.1,3.2,3.3,3.4,3.5,3.6,3.7,3.8,3.9,4,4.1,4.2,4.3,4.4,4.5,4.6,4.7,4.8,4.9,5,5.1,5.2,5.3,5.4,5.5,5.6,5.7,5.8,5.9,6}
\foreach \y in {0,0.1,0.2,0.3,0.4,0.5,0.6,0.7,0.8,0.9,1,1.1,1.2,1.3,1.4,1.5,1.6,1.7,1.8,1.9,2,2.1,2.2,2.3,2.4,2.5,2.6,2.7,2.8,2.9,3,3.1,3.2,3.3,3.4,3.5,3.6,3.7,3.8,3.9,4,4.1,4.2,4.3,4.4,4.5,4.6,4.7,4.8,4.9,5,5.1,5.2,5.3,5.4,5.5,5.6,5.7,5.8,5.9,6}
{\fill (\x,\y) circle (0.4pt);
}

\end{tikzpicture}
\caption{The case $\eta>\frac12$}

\end{figure}
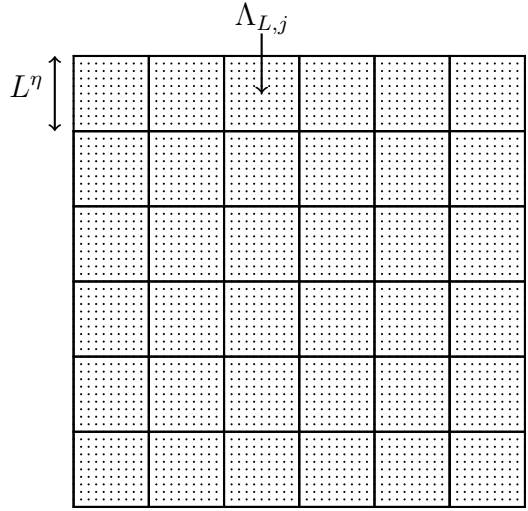

Define
$$Y_{L,j} =X_{L,j} - \Ex{X_{L,j}}.$$
We shall establish the convergence of the sum (over $j$) of $Y_{L,j}$ normalized by $\sqrt{|\Lambda_L|^{1-\tilde{\eta}}}$ to a Normal distribution using Theorem \ref{thm:Lindeberg}. Note that $Y_{L,j}$ are i.i.d., and each $X_{L,j}$ converges in distribution to a Poisson random variable with parameter $\lambda$. Applying Lemma \ref{lemmma:moments}, and Remark \ref{rem:uniform_convergence}, we obtain that
$$\Var{Y_{L,j}} \underset{L \rightarrow \infty}{\longrightarrow} \lambda^2,  $$
and this convergence is uniform in $j$.
Again, from the convergence of $X_{L,j}$ to a Poisson random variable and the uniform convergence of the variance to the variance of a Poisson random variable,
$$\underset{n \rightarrow \infty}{\lim} \frac{1}{\sum_j \Var{Y_{L,j}}}\sum_{j=1}^{M_L} \Ex{Y_{L,j}^2 \cdot \chi_{\left|Y_{L,j}\right| > \epsilon \sqrt{\Var{\sum_j Y_{L,j}} }}}=0 $$
for every $\epsilon >0$. Therefore,
$$\frac{\sum_{j=1}^{M_L(\beta)} X_{L,j}-\Ex{X_{L,j}}}{\sqrt{|\Lambda_L|^{1-\tilde{\eta}}}}\overset{d}{\longrightarrow} N(0,\sigma^2) $$
with $\sigma^2=f(E)^2\cdot (b-a)^2$. Note that just as in the previous proof, in order to determine the magnitude of the variance, we used the fact that
$$\frac{|\Lambda_L|^{1-\tilde{\eta}}}{M_L(\beta)}\overset{L \rightarrow \infty}{\longrightarrow} 1. $$
From Proposition \ref{prop:mainprop}, (taking $\beta = \tilde{\eta}$, $\alpha = \frac12-\frac{\tilde{\eta}}{2}$ and $g = \chi_{(a,b)}$)
\begin{equation}\label{equ:approx1}
  \underset{L\longrightarrow \infty}{\lim} \frac{1}{\sqrt{|\Lambda_L|^{1-\tilde{\eta}}}}\Ex{X_L-\sum_{j=1}^{M_L(\beta)} X_{L,j}}=0,
\end{equation}
and hence
$$\frac{X_L-|\Lambda_L|^{1-\tilde{\eta}}\cdot f(E)\cdot (b-a)}{\sqrt{|\Lambda_L|^{1-\tilde{\eta}}}}\overset{d}{\longrightarrow} N(0,\sigma^2). $$
From \eqref{equ:approx1}, we also infer that
$$\Var{\frac{X_L-\sum_{j=1}^{M_L\left(\tilde{\beta}\right)}X_{L,j}}{\sqrt{|\Lambda_L|^{1-\tilde{\eta}}}}}\overset{L \rightarrow \infty}{\longrightarrow} 0, $$
and since
$$\Var{\frac{\sum_{j=1}^{M_L\left(\tilde{\beta}\right)}X_{L,j}}{\sqrt{|\Lambda_L|^{1-\tilde{\eta}}}}} \overset{L \rightarrow \infty}{\longrightarrow} \sigma^2, $$
$$\Var{\frac{X_L}{\sqrt{|\Lambda_L|^{1-\tilde{\eta}}}}}=\Var{\frac{X_L-\sum_{j=1}^{M_L\left(\tilde{\beta}\right)}X_{L,j}}{\sqrt{|\Lambda_L|^{1-\tilde{\eta}}}}}
+\Var{\frac{\sum_{j=1}^{M_L\left(\tilde{\beta}\right)}X_{L,j}}{\sqrt{|\Lambda_L|^{1-\tilde{\eta}}}}}$$
$$+2\Cov{\frac{X_L-\sum_{j=1}^{M_L\left(\tilde{\beta}\right)}X_{L,j}}{\sqrt{|\Lambda_L|^{1-\tilde{\eta}}}},\frac{\sum_{j=1}^{M_L\left(\tilde{\beta}\right)}X_{L,j}}{{\sqrt{|\Lambda_L|^{1-\tilde{\eta}}}}}}
\overset{L \rightarrow \infty}{\longrightarrow} \sigma^2 $$
as well.
\end{proof}

Equipped with the CLT for large $\eta$'s, we can prove Theorem \ref{thm:main}.

\begin{proof} [Proof of Theorem \ref{thm:main}]
Let $\eta > 0$, and denote
$$\sigma^2 = f(E)\cdot (b-a)^2. $$
If $\eta >\frac12$, the theorem is immediately true using Proposition \ref{prop:localclt}. For $\eta \le \frac12$, there exists $j\in \N$ such that $\frac{1}{2^j} < \eta \le \frac{1}{2^{j-1}}$ and we begin by dividing $\Lambda_L$ into $M_L\left(\frac12\right)$ boxes $\{\Lambda_{L,k}\}_{k=1}^{M_L\left(\frac12\right)}$. Now for every $k$, we divide $\Lambda_{L,k}$ again with $\beta=\frac12$ into $M_{L,k}(\frac12)$ boxes $\{\Lambda_{L,k,k'}\}_{k'=1}^{M_{L,j}\left(\frac12\right)}$.
We iterate this process $j-1$ times and produce
$$M_L\left(\frac12\right)\cdot M_{L,k}\left(\frac12\right)\cdot... \cdot M_{L,\left(k_1,...,k_{j-1}\right)}\left(\frac12\right) $$
boxes (see Figure \ref{fig:smalleta}).

\begin{figure} \label{fig:smalleta}
\centering
\begin{tikzpicture}
\draw[thick] (0,0) rectangle (8,8);
\draw[ultra thick] (6,6) rectangle (8,8);

\draw[thick,<->] (-0.25,6) -- (-0.25,8);
\node[text width=6cm, anchor=west, right] at (-1,7)   {$L^\frac12$};
\draw[thick,->] (8.4,7) -- (7,7);
\node[text width=3cm, anchor=west, right] at (8.5,7)   {$\Lambda_{L,k}$};

\draw[ultra thick] (7,4) rectangle (8,5);
\draw[thick,<->] (-0.25,4) -- (-0.25,5);
\node[text width=6cm, anchor=west, right] at (-1,4.5)   {$L^\frac14$};
\node[text width=3cm, anchor=west, right] at (8.5,4.5)   {$\Lambda_{L,k_1,k_2}$};
\draw[thick,->] (8.4,4.5) -- (7.5,4.5);

\fill (-0.6,3.5) circle (1.2pt);
\fill (-0.6,3) circle (1.2pt);
\fill (-0.6,2.5) circle (1.2pt);

\fill (9.25,3.5) circle (1.2pt);
\fill (9.25,3) circle (1.2pt);
\fill (9.25,2.5) circle (1.2pt);

\foreach \x in {0,0.1,0.2,0.3,0.4,0.5,0.6,0.7,0.8,0.9,1,1.1,1.2,1.3,1.4,1.5,1.6,1.7,1.8,1.9,2,2.1,2.2,2.3,2.4,2.5,2.6,2.7,2.8,2.9,3,3.1,3.2,3.3,3.4,3.5,3.6,3.7,3.8,3.9,4,4.1,4.2,4.3,4.4,4.5,4.6,4.7,4.8,4.9,5,5.1,5.2,5.3,5.4,5.5,5.6,5.7,5.8,5.9,6,6.1,6.2,6.3,6.4,6.5,6.6,6.7,6.8,6.9,7,7.1,7.2,7.3,7.4,7.5,7.6,7.7,7.8,7.9,8}
\foreach \y in {0,0.1,0.2,0.3,0.4,0.5,0.6,0.7,0.8,0.9,1,1.1,1.2,1.3,1.4,1.5,1.6,1.7,1.8,1.9,2,2.1,2.2,2.3,2.4,2.5,2.6,2.7,2.8,2.9,3,3.1,3.2,3.3,3.4,3.5,3.6,3.7,3.8,3.9,4,4.1,4.2,4.3,4.4,4.5,4.6,4.7,4.8,4.9,5,5.1,5.2,5.3,5.4,5.5,5.6,5.7,5.8,5.9,6,6.1,6.2,6.3,6.4,6.5,6.6,6.7,6.8,6.9,7,7.1,7.2,7.3,7.4,7.5,7.6,7.7,7.8,7.9,8}
{\fill (\x,\y) circle (0.4pt);
}

\draw[thick,<->] (-0.25,0) -- (-0.25,0.5);
\node[text width=6cm, anchor=west, right] at (-1,0.25)   {$L^\eta$};
\draw[ultra thick] (7.5,0) rectangle (8,0.5);
\node[text width=3cm, anchor=west, right] at (8.5,0.25)   {$\Lambda_{L,k_1,...,k_{j}}$};
\draw[thick,->] (8.4,0.25) -- (7.75,0.25);

\foreach \x in {0,0.5,1,1.5,2,2.5,3,3.5,4,4.5,5,5.5,6,6.5,7,7.5}
\foreach \y in {0,0.5,1,1.5,2,2.5,3,3.5,4,4.5,5,5.5,6,6.5,7,7.5}
{
\draw[thick] (\x,\y) rectangle (\x+0.5,\y+0.5);
}

\end{tikzpicture}
\caption{The case $\eta \le \frac12$}
\end{figure}
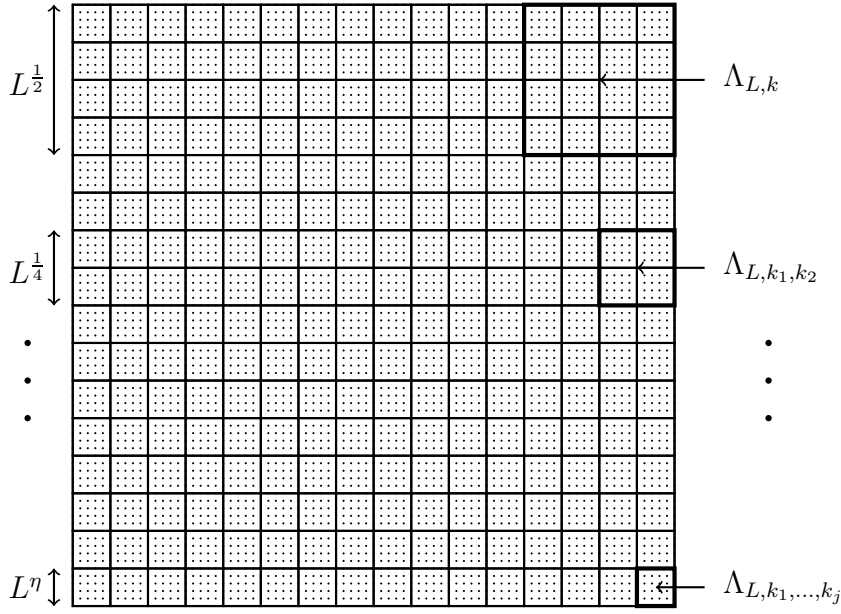

Note that
$$\frac{M_{L,\left(k_1,...,k_m\right)}(\frac12)}{|\Lambda_L|^{\frac{1}{2^{m+1}}}}\underset{L \rightarrow \infty}{\longrightarrow} 1, $$
so
$$\frac{M_L\left(\frac12\right)\cdot M_{L,k}\left(\frac12\right)\cdot... \cdot M_{L,\left(k_1,...,k_{j-1}\right)}\left(\frac12\right)}{|\Lambda_L|^{1-\frac{1}{2^{j-1}}}}\underset{L \rightarrow \infty}{\longrightarrow} 1 $$
and for any such box $\Lambda_{L,\left(k_1,...,k_{j-1}\right)}$, there exist
$$a_L \equiv a_L\left(k_1,...,k_{j-1}\right) \in \R^d, \ , c_L\equiv c_L\left(k_1,...,k_{j-1}\right)>0$$ and with $c_L \rightarrow 0$
such that
$$\Lambda_{L,(k_1,...,k_{j-1})} = \left([-L'+c_L,L'-c_L]^d+a_L\right) \cap \Z^d $$
with $L' = L^{\frac{1}{2^{j-1}}}$.
Thus, each $\Lambda_{L,(k_1,...,k_{j-1})}$ possesses the requirements of Proposition \ref{prop:localclt} with $\tilde{\eta}=2^{j-1}\eta$. This means that if we define the random variable $X_{L,(k_1,...,k_{j-1})}$ to be the number of eigenvalues of $H_{\Lambda_{L,(k_1,...,k_{j-1})}}$ in the interval $I_L$,
$$Y_{L,(k_1,...,k_{j-2},k_{j-1})} \equiv \frac{X_{L,(k_1,...,k_{j-1})}-\Ex{X_{L,(k_1,...,k_{j-1})}}}{\sqrt{|\Lambda_L|^{\frac{1}{2^{j-1}}-\eta}}}$$
converges to a Normal random variables with variance $\sigma^2$ as $L$ tends to infinity, and
\begin{equation}\label{equ:converge}
  \Var{Y_{L,(k_1,...,k_{j-2},k_{j-1})}} \overset{L \rightarrow \infty}{\longrightarrow} \sigma^2
\end{equation}
as well.
Moreover, using a similar argument to the one presented in Remark \ref{rem:uniform_convergence}, the convergence rate in \eqref{equ:converge} does not depend on $k_1,...,k_{j-1}$.

Fix $k_1,...,k_{j-2}$.
As in the proof of Proposition \ref{prop:localclt}, we apply Theorem \ref{thm:Lindeberg}.

$Y_{L,(k_1,...,k_{j-2},k_{j-1})}$ are independent,  $Y_{L,(k_1,...,k_{j-2},k_{j-1})}$ converges to a Normal random variable and
$$\Var{Y_{L,(k_1,...,k_{j-2},k_{j-1})}} \overset{L \rightarrow \infty}{\longrightarrow} \sigma^2. $$
Moreover, using a similar argument as in Remark \ref{rem:uniform_convergence}, this convergence does not depend on $k_1,...,k_{j-2},k_{j-1}$. Therefore, denoting $\tilde{k}_m=(k_1,...,k_{j-2},k_m)$
$$\frac{1}{|\Lambda_L|^{\frac{1}{2^{j-1}}}}\underset{L \rightarrow \infty}{\lim} \sum_{m=1}^{M_{L,\tilde{k}_m}} \Ex{Y_{L,\tilde{k}_m}^2 \cdot \chi_{\left|Y_{L,\tilde{k}_m}\right| > \epsilon \sqrt{\Var{\sum_j Y_{L,\tilde{k}_m}} }}}=0.$$
and thus
$$\sum_{k=1}^{M_{L,\left(k_1,...,k_{j-2}\right)}(\frac12)} Y_{L,(k_1,...,k_{j-2},k)} \underset{L\longrightarrow \infty}{\overset{d}{\longrightarrow}} N(0,\sigma^2)$$
From Proposition \ref{prop:mainprop},
$$\frac{X_{L,(k_1,...,k_{j-2})}-\Ex{X_{L,(k_1,...,k_{j-2})}}}{\sqrt{|\Lambda_L|^{\frac{1}{2^{j-2}}-\eta}}}\overset{d}{\longrightarrow} N(0,\sigma^2), $$
and
$$\Var{\frac{X_{L,(k_1,...,k_{j-2})}}{\sqrt{|\Lambda_L|^{\frac{1}{2^{j-2}}-\eta}}}}\overset{L \rightarrow \infty}{\longrightarrow} \sigma^2. $$
Again, the rate of convergence here does not depend on $k_1,...,k_{j-2}$.

Iterating this process $j-2$ more times, we end with the desired result.

\end{proof}


\begin{thebibliography}{99}

\bibitem{AW} M.\ Aizenman, S.\ Warzel, \emph{Random operators}. American Mathematical Soc., Providence, 2015.

\bibitem{Billingsley} P.\ Billingsley, \emph{Probability and measure: Third edition}.  John Wiley \& Sons, New York, 1995.

\bibitem{BMK} A.\ Boutet de Monvel and A.\ Khorunzy, \emph{Asymptotic distribution of smoothed eigenvalue density I.\ Gaussian random matrices}, Random Oper.\ Stochastic Equations \textbf{7} (1999), 1--22.

\bibitem{BD} J.\ Breuer, M.\ Duits, \emph{Universality of mesoscopic fluctuations for orthogonal polynomial ensembles}, Comm.\ Math.\ Phys.\ \textbf{342}(2), 491--531, (2016).

\bibitem{CGK} J.M.\ Combes, F.\ Germinet, A.\ Klein, \emph{Generalized eigenvalue-counting for the Anderson model}, J.\ Stat.\ Phys.\ \textbf{135}, 201--216, (2009).

\bibitem{CFKS} H.L.\ Cycon, R.G.\ Froese, W.\ Kirsch, B.\ Simon, \emph{Schr\"odinger Operators}. Springer, Berlin Heidelberg, 1986.

\bibitem{DJ} M.\ Duits and K.\ Johansson, \emph{On mesoscopic equilibrium for linear statistics for Dyson's Brownian Motion} Mem.\ Amer.\ Math.\ Soc.\ \textbf{255} no.\ 1222 v+188  pp.\ (2018).

\bibitem{Durett} R.\ Durett, \emph{Probability: Theory and Examples}, Cambridge Series in Statistical and Probabilistic Mathematics, Fourth Edition,  Cambridge University Press, Cambridge (2010).

\bibitem{FKS} Y.\ V.\ Fyodorov, B.\ Khoruzenko and N.\ Simms, \emph{Fractional Brownian Motion with Hurst index $H=0$ and the Gaussian Unitary Ensemble} Ann.\ Probab.\ \textbf{28} (2000), no.\ 44(4), 2980--3031, (2016).

\bibitem{GW} Y.\ Grinshpon and M.J.\ White, \emph{Spectral fluctuations for the multi-dimensional Anderson model}, preprint, (2020).

\bibitem{KG} F.\ Germinet and F.\ Klopp, \emph{Spcetral statistics for random Schr\"odinger operators in the localized regime}, J.\ Europ.\ Math.\ Soc.\ \textbf{16:9} (2014).

\bibitem{KP} W.\ Kirsch and L.A.\ Pastur, \emph{On the analogues of Szeg\H{o}'s theorem for ergodic operators}, Sb. Math. \textbf{206} no.\ 1-2, 93--119, (2015).

\bibitem{Lam} G.\ Lambert, \emph{Mesoscopic central limit theorem for the circular $\beta$ ensembles and applications}, Electron.\ J.\ Probab.\ no.\ \textbf{26}, 1--33, (2021).

\bibitem{M} S.~A.~Molchanov, \emph{The local structure of the spectrum of the one-dimensional Schr\"odinger operator}, Commun.\ Math.\ Phys.\ \textbf{78}, 429--446, (1981).

\bibitem{Minami} N.\ Minami, \emph{Local fluctuation of the spectrum of a multidimensional Anderson tight binding model}, Commun. Math. Phys. \textbf{177}, 709--725, (1996).

\bibitem{PS} L.\ Pastur and M.\ Shchrebina, \emph{Szeg\"o-type theorems for one-dimensional schr\"odinger operators with random potential (smooth case)}, J.\ Math.\ Phys.\ Anal.\ Geom.\ \textbf{14}, 362--388 (2018).

\bibitem{Reznikova} A.\ Ya.\ Reznikova, \emph{The central limit theorem for the spectrum of random Jacobi matrices}, Theor.\ Probability Appl.\ \textbf{23}, no.\ 3, (1980).

\bibitem{Sosh} A.\ Soshnikov, \emph{The central limit theorem for local linear statistics in classical compact groups and related combinatorial identities}, Ann.\ Probab.\ \textbf{28} (2000), no.\ 3, 1353--1370.

\bibitem{VSW} P.\ Von Soosten and S.\ Warzel, \emph{Delocalization and continuous spectrum for ultrametric random operators}, Ann. Henri Poincar\^e, \textbf{20} (9), (2019).
\end{thebibliography}
\end{document}